\newtheorem{prop}{Proposition}
\newtheorem{Result}{Result}
\newtheorem{Definition}{Definition}
\newcommand{\be}{\begin{equation}}
\newcommand{\ee}{\end{equation}}
\newcommand{\beqa}{\begin{eqnarray*}}
\newcommand{\eeqa}{\end{eqnarray*}}
\newcommand{\beqn}{\begin{eqnarray}}
\newcommand{\eeqn}{\end{eqnarray}}
\newcommand{\baa}{\begin{array}}
\newcommand{\eaa}{\end{array}}
\newcommand{\bcc}{\begin{center}}
\newcommand{\ecc}{\end{center}}
\newcommand{\btab}{\begin{tabular}}
\newcommand{\etab}{\end{tabular}}
\newcommand{\lb}{\label}
\newcommand{\mcF}{\mathcal{F}}
\newcommand{\bA}{\boldsymbol{A}}
\newcommand{\bB}{\boldsymbol{B}}
\newcommand{\bI}{\boldsymbol{I}}
\newcommand{\bW}{\boldsymbol{W}}
\newcommand{\bU}{\boldsymbol{U}}
\newcommand{\bX}{\boldsymbol{X}}
\newcommand{\bY}{\boldsymbol{Y}}
\newcommand{\bZ}{\boldsymbol{Z}}
\newcommand{\CC}{{\mathbb C}}
\newcommand{\ZZ}{{\mathbb Z}}
\newcommand{\bigzero}{\mbox{\normalfont\Large\bfseries 0}}
\newcommand{\rvline}{\hspace*{-\arraycolsep}\vline\hspace*{-\arraycolsep}}
\newcommand*{\smap}{\ensuremath{S\textsf{-MAP}}}
\newcommand*{\sxmap}{\ensuremath{S_{\bX}\textsf{-MAP}}}
\newcommand*{\mlip}{\ensuremath{m\textsf{-LIP}}}
\date{}
\title{}
\title{Achieving Privacy Utility Balance for Multivariate Time Series Data
}
\author{
  Gaurab Hore  \\
	University of Maryland Baltimore County\\
    1000 Hilltop Cir, Baltimore, MD 21250\\
	\texttt{gaurabh1@umbc.edu} \\
   \And
  Tucker McElroy \\
	Research and Methodology Directorate, U.S. Census Bureau\\
	4600 Silver Hill Road,Washington, D.C. 20233-9100, USA\\
	\texttt{tucker.s.mcelroy@census.gov}\\
    \And
    Anindya Roy \\
	University of Maryland Baltimore County\\
    1000 Hilltop Cir, Baltimore, MD 21250\\
	\texttt{anindya@umbc.edu} 
}
\begin{document}
\maketitle
\begin{abstract}
 Utility-preserving data privatization is of utmost importance for data-producing agencies. 
 The popular noise-addition privacy
 mechanism distorts autocorrelation patterns in time series data, thereby marring utility;  in response, \cite{MRH2023} introduced all-pass filtering  (FLIP) as a utility-preserving time series data privatization method. Adapting this concept to multivariate data is more complex, and in this paper we propose a multivariate all-pass (MAP) filtering method, employing an optimization algorithm to achieve the best balance between data utility and privacy protection. To test the effectiveness of our approach, we apply MAP filtering to both simulated and real data, sourced from the U.S. Census Bureau's Quarterly Workforce Indicator (QWI) dataset.\\
 {\bf Keywords:} All-pass filter; Linear incremental privacy; Multiple time series; Spectral factorization
\end{abstract}

\paragraph{Disclaimer}
Any opinions and conclusions expressed herein are those of the authors and do not represent the views of the U.S. Census Bureau. All results in this paper use publicly available data from Census Bureau websites.

\section{Introduction}

With increased digital participation and online engagement, safeguarding sensitive data has become extremely important over the last decade. Researchers have devised innovative approaches for data privacy and a multitude of privacy measures along with their implementation mechanisms have been proposed in the literature. However, most privacy mechanisms rely upon noise addition or noise multiplication methods. 
When it comes to time series, noise addition (or multiplication) may significantly change the autocorrelation structure, thereby diminishing the quality and utility of such data. Also, since the privacy measures are developed primarily for databases with independent entries, the privacy guarantees no longer hold for dependent data (such as time series data). 

Thus, there is a critical need for privacy mechanisms for time series data that ensure both privacy and data utility. In \cite{WaZh2009}, the authors forcefully argue for maintaining data utility while implementing disclosure avoidance algorithms. We concur with that sentiment. However, there is a shortage of privacy mechanisms for time series data that maintain data utility -- methods focused on privacy fail to take properties of the temporal dynamics (such as serial correlation) into account. In a recent paper, \cite{MRH2023}  proposed a proper privacy-utility framework called FLIP for regularly-spaced time series data. There are several other approaches for univariate time series, developed in different disciplines like economics, cryptography, data mining, and data-streaming (and under different engineering applications such as power-grid) that are available in the literature; see \cite{Abowd2012}, \cite{RN2010}, \cite{Acs2012}, \cite{Lyu2017}, 
\cite{Lako2021}, \cite{Hong2013}, \cite{SCR2011}, \cite{EFM2015}, \cite{SST2009}, \cite{Fior2019}, \cite{Stach2019}, \cite{Kats2022}, and the references therein. However, these approaches do not use any mathematical framework for optimizing the privacy-utility trade-off. 

For multiple time series, the need to account for utility while ensuring privacy is even more stark. This is because most approaches perform privacy evaluation on a series-by-series basis and ignore the cross-correlation structure, a critical component of data utility for multivariate time series. To our knowledge, there are no privacy procedures that preserve cross-series dependence information along with marginal time series properties. The present article fills that gap. 
Recently, several articles have looked at forecasting properties of multiple time series after the application of privacy mechanisms. Many are based on deep-learning and predictive structures for dynamical models; see \cite{Arcolezi2022}, \cite{Imtiaz2020}, \cite{Leukam2021}. By examining the forecast properties of privatized series, these approaches do consider data utility, but they do not use any formal framework for balancing privacy and utility. Overall, it seems that privacy mechanisms that formally address data utility are not available for multiple time series. This article proposes a formal privacy utility framework for regularly-spaced multiple time series. 

The FLIP methodology suggested by \cite{MRH2023}  employs all-pass filtering to achieve privacy while simultaneously preserving time series data utility. Focusing on regularly-sampled time series data, \cite{MRH2023} introduced the ``Linear Incremental Privacy" (LIP) measure, and presented a novel set of utility conditions referred to as ``second-order utility". Although a predictive measure that is more apt for time series, the incremental privacy addressed in the LIP framework is similar in spirit to {\it differential privacy} (DP), which addresses disclosure avoidance beyond what is available to the attacker.  Differential privacy is a popular privacy measure
that provides a formal mathematical definition of privacy. Developed in a series of papers (\cite{Dw2006}, \cite{DwMcNiSm2006}, \cite{DwRo2014}),  DP is generally accepted as the gold standard data privacy measure. It has been accepted widely in industrial and government data protection plans, including implementation in the decennial U.S. census, probably one of the largest and most complex data collection exercises; see \cite{Abowd2020}. One of the reasons for their popularity is that DP mechanisms provide hard privacy guarantees. An elegant statistical interpretation of DP is given in \cite{Meng2020}, where the authors establish the concept of DP in terms of a posterior quantity, making it attractive for statistical maneuvering. Despite this, DP and other popular data perturbation tools are primarily designed for databases with independent entries
(i.e.,  the mathematical formulation is valid under the independence assumption), and lack optimality properties for time series data. While some articles (\cite{SoWaCh2017} and \cite{SoCh2017}) examined modified DP mechanisms that are applied to time series structures,  none provide any optimal balancing of privacy and utility.

Whereas the incremental privacy measure under LIP can be extended to the multivariate time series context, the concept of all-pass filtering that was the primary tool for implementation of LIP is non-trivial to formulate in the multivariate case.  In particular, the filters become matrix-valued, and hence the algebra is no longer commutative,
 complicating the mathematics. The main goal of this article is to develop the multivariate generalization of FLIP along with a generalization of all-pass filtering for multiple time series. The multivariate 
all-pass filter is developed in Section ~2 and the multivariate extension of LIP, called $\mlip$ is discussed in Section~3. Section~4 provides the details for the implementation of $\mlip$ in practice. Limited numerical studies are given in Section~5 along with a real data application that examines Quarterly Workforce Indicator (QWI) data published by the U.S. Census Bureau. Section~6 provides conclusions and a discussion of future work.

\section{Multivariate All-pass Filters}
\lb{All-pass filtering}
The primary mathematical concept used in the development of the FLIP mechanism (\cite{MRH2023}) for univariate time series was the idea of all-pass filtering; here we make non-trivial extensions
to the multivariate setting.
In particular, we define the notion of a multivariate all-pass filter and describe a suitable filter class   that is particularly suitable for the privacy application. Hereafter, we employ the following notation:  the braces notation $\{\mathbf{X}_t\}$ (the bold font indicates that 
the time series is multivariate) denotes the entire time series, while $\mathbf{X}_t$  denotes the single random vector at time $t$.

\subsection{Multivariate All-Pass Filtering}

Suppose that  $\{ \mathbf{X}_t \}$  is a second-order stationary multivariate time series of dimension $n$, with components denoted
 by ${X}_{j,t}$ for $1 \leq j \leq n$.  Denoting the process' autocovariance function by $\Gamma_{ \mathbf{X} } (h) = \mbox{Cov} ( \mathbf{X}_{t+h}, \mathbf{X}_t )$ for $h \in \mathrm{Z}$,  its spectral density is  defined by  $S_{\mathbf{X}} (\lambda) =  \sum_h e^{-i h \lambda}  \Gamma_{ \mathbf{X} } (h)$ for $\lambda \in [-\pi, \pi]$. Suppose for a complex matrix $\bA$, $\bA^*$ denotes its conjugate transpose. Then $S_{\bX}$ is a matrix-valued function from $[-\pi, \pi]$  to $\CC^{n\times n}$ such that, for each $\lambda \in [-\pi, \pi]$, $S_{\bX}(\lambda)$ is a hermitian non-negative definite matrix and $S_{\mathbf{X}} (-\lambda) = S^*_{\mathbf{X}} (\lambda).$ 
 
 
\begin{Definition}
Let $S: [-\pi, \pi] \to \CC^{n\times n}$ be such that for each $\lambda \in [-\pi, \pi]$, $S(\lambda)$ is a hermitian non-negative definite matrix and $S(-\lambda) = S^*(\lambda).$ Then $S$ will be called a {\it spectral density matrix function}, or simply a {\it spectral density}. 
\lb{def:spec_den}
\end{Definition}
 
We will use the abbreviation  $z = e^{-i \lambda}$, so that $S_{\mathbf{X}} (\lambda) =  \sum_h z^h \Gamma_{ \mathbf{X} } (h)$.
The autocovariances can be recovered from the spectral density via the formula 
\[
    \Gamma_{ \mathbf{X} } (h) = 
    \frac{1}{2 \pi} \int_{-\pi}^{\pi}
    e^{i \lambda h} S_{\mathbf{X}} (\lambda) d\lambda.
\]
 Letting $B$ denote the backshift operator \cite{MP2020}, $\Psi (B) =\sum_k \Psi_k B^k $  defines a multivariate linear time-invariant filter, where each coefficient $\Psi_k$ is a $n \times n$-dimensional matrix.  This filter operates on a time series  $\{ \mathbf{X}_t \}$
as follows:
\[
\Psi (B) \mathbf{X}_t = 
\sum_k \Psi_k \mathbf{X}_{t-k}.
\]
Evaluating the filter at $z = e^{-i \lambda}$ yields the frequency response function of the  filter, viz. $\Psi (z) = \sum_k \Psi_k z^k$. Letting  $\{ \mathbf{Y}_t \}$ denote this filter output, it is also second-order stationary so long as the filter's frequency response function has finite matrix norm at each $\lambda$.  Then 
$S_{\mathbf{Y}}$ is related to $S_{\mathbf{X}}$
 as (see \cite{Brillinger})
 \begin{equation}
 S_{\mathbf{Y}}(\lambda)=\Psi(z)S_{\mathbf{X}}(\lambda)  { \Psi(z^{-1}) }^{\prime}. 
 \label{eq:allpass}
 \end{equation}
When $n=1$ (the univariate case), $\Psi (z)$
is an all-pass filter if $|\Psi (z)|=1$ for
all $\lambda$, and hence $S_{\mathbf{Y}} \equiv
S_{\mathbf{X}}$.  Extending this concept to
the multivariate context ($n > 1$),
we say that a matrix filter $\Psi (z)$ is all-pass
if $S_{\mathbf{Y}} \equiv S_{\mathbf{X}}$
in (\ref{eq:allpass}).  Though we might conjecture
that it is sufficient to demand that 
$\Psi (z)$ is unitary (i.e., $\Psi (z) {\Psi (z) }^* = \bI$, the identity matrix) for each $\lambda$, such a condition is too demanding in
practice;
for the relation \eqref{eq:allpass} to hold with $S_{\bY} = S_{\bX}$ and  for any spectral density $S_{\bX}$, $\Psi(z)$ must commute with every spectral density matrix function (of the same order) at each frequency $\lambda$.  This occurs if and only if $\Psi (z) = \bI$. Thus, there are no universal all-pass filters in the matrix case other than the trivial identity filter. 

 Fortunately, for the data privacy application we only need to filter specific series whose spectral density is known to the data curator. Thus, it suffices to generate a class of filters that act as all-pass filters for a given spectral density $S_{\bX}.$  Given this background, we can state the definition of the desired multivariate all-pass filter for a specified spectral density $S$ as the following.

\begin{Definition}[$S$-Multivariate All-Pass or $\smap$]
\label{def:mvar-allpass}
Given a spectral density matrix function $S$, 
a linear time invariant filter $\Psi (B)$ is
said to be {\it $S$-Multivariate All-Pass } (or $\smap$ for short)
if the relation 
\[ S(\lambda) = \Psi(z)S(\lambda)  { \Psi(z) }^{*} \] 
holds for all $\lambda \in [-\pi, \, \pi].$
 \end{Definition} 

 In view of  Definition~\ref{def:mvar-allpass} and equation \eqref{eq:allpass}, 
 if $\{ \bX_t \}$ is a second-order stationary time series with spectral density $S_{\bX}$, and if $\bY_t = \Psi(B)\bX_t$ is the filtered series, then the spectral density $S_{\mathbf{Y}}$ of $\{ \bY_t \}$ equals $S_{\mathbf{X}}$ provided $\Psi$ is $\sxmap.$ If $\Psi$ is $\sxmap$, then it implies that  the autocovariances of $\{ \mathbf{Y}_t \}$ are
 the same as those of $\{ \mathbf{X}_t \}$.
Clearly, given an $n-$dimensional spectral density $S_{\bX}$, $\Psi (z) = \bI$  
is a trivial $\sxmap$ filter,
 but there are many more choices.

\subsection{ A Class of Multivariate All-pass Filters}
To generate $\smap$ filters, one needs to find $\Psi$ that satisfies the condition in Definition \ref{def:mvar-allpass}. However, directly solving for the components of the filter from the equation in Definition~\ref{def:mvar-allpass} is not a feasible option. Consider the $n=2$ case.
The spectral density   for 
$\{ \mathbf{X}_t \}$ is a $2 \times 2$ matrix for
each frequency $\lambda$, and we denote the four
scalar entries as follows:
\[
S_\mathbf{X}(\lambda)=\begin{pmatrix}
    S_{X_1}(\lambda) & S_{X_1X_2}(\lambda) \\
    S_{X_2X_1}(\lambda) & S_{X_2}(\lambda)
\end{pmatrix}.
\]
Similarly, a bivariate filter can be written in
terms of scalar filters as follows:
\[
\Psi(z)=\begin{pmatrix}
    \Psi_{11}(z) & \Psi_{12} (z) \\
    \Psi_{21} (z) & \Psi_{22} (z)
\end{pmatrix}.
\]
The filtered series $\{\mathbf{Y}_t\}$ 
can then be expressed via
\[
\mathbf{Y}_t= \Psi(B)\mathbf{X}_t= \begin{pmatrix}
    \Psi_{11} (B) & \Psi_{12} (B) \\
    \Psi_{21} (B) & \Psi_{22} (B)
\end{pmatrix} \begin{pmatrix} X_{1,t}\\ X_{2,t}\end{pmatrix} = \begin{pmatrix}
    \Psi_{11} (B) X_{1,t} +  \Psi_{12} (B) X_{2,t} \\ \Psi_{21} (B) X_{1,t} +  \Psi_{22} (B) X_{2,t}
\end{pmatrix}.
\]
The spectral density matrix for $\{\mathbf{Y}_t\}$ 
is given by (\ref{eq:allpass}),
and by setting $S_\mathbf{Y} = S_\mathbf{X}$,
we can proceed to determine the scalar filters
$\Psi_{11} (B)$, $\Psi_{21} (B)$, $\Psi_{12} (B)$,
and $\Psi_{22} (B)$. When $n > 2$ there will be $n^2$
functions $\Psi_{jk} (B)$ to determine such that
 $S_\mathbf{Y} \equiv S_\mathbf{X}$, and the task of finding such solutions becomes formidable even for small to moderate $n$. It will be advantageous to find suitable special cases for which closed-form solutions are readily available.

 We next develop
 a special case that will be useful in our
 more general treatment. Suppose that
 $\{ \mathbf{X}_t \}$ is a white noise time series
 of covariance matrix $\bI$, so that
  $S_\mathbf{X} (\lambda ) = \bI$.
  Then the all-pass condition becomes
\begin{equation}
    \label{eq:allpass-condition-id}
 \bI = \Psi (z) { \Psi (z^{-1}) }^{\prime}
\end{equation}
for $z = e^{-i \lambda}$, and all $\lambda \in [-\pi, \pi]$ (i.e., $\Psi (z)$ is unitary for all $\lambda$). One way to parameterize such unitary functions is through the matrix {\it cepstral} representation discussed in \cite{holan2017cepstral}.
 Consider a matrix Laurent series 
$\Omega(z)  = \sum_{k \in \mathrm{Z}}
\Omega_k z^k$
that is related to $\Psi (z)$ via
\be
\Psi(z)  = \exp \{ \Omega (z) \}.
\lb{cepstral unitary}
\ee
Then $\Omega (z)$ is the cepstral representation
of $\Psi (z)$, and the  $\Omega_k$ are the matrix cepstral coefficients.
Then (\ref{eq:allpass-condition-id}) implies that
\[
 \bI = \exp \{ \Omega (z) \}
  \exp \{ {\Omega (z^{-1}) }^{\prime} \},
\]
using the transpose property of the matrix 
exponential.  Recall that $z = e^{-i \lambda}$,
so $z^{-1} = e^{i \lambda} = \bar{z}$.
If $\Omega (z) = - {\Omega (\bar{z})}^{\prime}$,
then (since $\Omega (z)$ and $-\Omega (z)$ commute)
\[
  \exp \{ \Omega (z) \}
  \exp \{ - \Omega (z) \}
  = \exp \{ \Omega (z) - \Omega (z) \}
  = \exp \{ 0 \} = \bI.
\]
This condition on $\Omega (z)$ means
that $\Omega_{k} = - \Omega_{-k}^{\prime}$ 
for $k \in \mathrm{Z}$, 
implying $\Omega_0$ is a skew-symmetric matrix. 
We let $\mathcal{S}_n$ denote the set of
real $n$-dimensional skew-symmetric matrices.
Hence, anti-symmetric cepstral coefficients correspond to a unitary filter $\Psi (z)$.

We will use the parameterization of the unitary operators in terms of its cepstral representation to generate a suitable parametric class of $\smap$ filters for any specified spectral density $S$. 
For developing the special case of $\smap$ filters, we will assume  \\

\noindent {\bf Assumption~PD:} For each $\lambda \in [-\pi, \pi],$ the spectral density matrix $S(\lambda)$ is positive definite.\\

\noindent Also, we will use the following result, whose straightforward proof is omitted.
\begin{Result}
    Let $\bA$ and $\bB$ be two $n\times n$ complex nonsingular matrices. Then $\bA\bA^* = \bB\bB^*$ if and only if there exists a unitary matrix $\bU$ such that $\bA\bU = \bB.$
    \lb{lem:sqrt_equiv}
\end{Result}
\noindent 
Assumption {\bf PD}  states that the multiple time series to be protected are not cointegrated in the frequency domain at particular frequencies. From an implementation point of view, the assumption is not restrictive since under numerical estimation of the spectral density of the sensitive series, the estimate can be constrained to satisfy the assumption. 

Suppose a spectral density $S_{\bX}(\lambda)$ is given, and it is assumed to be positive definite at each $\lambda$. 
Under the positive definiteness assumption, at each frequency $\lambda \in [-\pi, \pi]$, the spectral density matrix $S_{\bX}(\lambda)$ admits a non-singular square root $S_{\bX}^+(\lambda)$, i.e., for each $\lambda \in [-\pi, \pi]$ we can find a full rank matrix $S_{\bX}^+(\lambda)$  such that 
\[  
S_{\bX}(\lambda) =  S_{\bX}^+(\lambda)S_{\bX}^+(\lambda)^*. 
\]
If the filter $\Psi(z)$ is  also non-singular, then by the relation \eqref{eq:allpass}, $S_{\bY}(\lambda)$ is also positive definite at each frequency, and hence admits non-singular square roots $S_{\bY}^+(\lambda).$
Thus 
\[ S_{\bY}^+(\lambda)S_{\bY}^+(\lambda)^* = \Psi(z)S_{\bX}^+(\lambda)S_{\bX}^+(\lambda)\Psi(z)^*.\]
For $\Psi(z)$ to be $\sxmap$, a sufficient condition is $S_{\bY}^+(\lambda) = S_{\bX}^+(\lambda)$ for all $\lambda \in [-\pi, \pi].$ Hence 
\[ S_{\bX}^+(\lambda)S_{\bX}^+(\lambda)^* = (\Psi(z)S_{\bX}^+(\lambda))(\Psi(z)S_{\bX}^+(\lambda))^*.\]
Then by Result~\ref{lem:sqrt_equiv}, we have 
$S_{\bX}^+(\lambda)U(z) = \Psi(z)S_{\bX}^+(\lambda)$
for some unitary matrix $U(z).$ This implies that  $\Psi(z) = S_{\bX}^+(\lambda)U(z)S_{\bX}^+(\lambda)^{-1}.$
Thus, for a given spectral density $S$, a class of $\smap$ filters is given by 
\be
\Psi(z) = S^+(\lambda) \, U(z) \, S^+(\lambda)^{-1}.
\label{eq:map_filter}
\ee
The implications of \eqref{eq:map_filter} are substantial. It means that given a spectral density $S$, we could select the desired all-pass filters from a rich class of  $\smap$ filters, obtained by rotating the expression in \eqref{eq:map_filter} over the unitary group, and everything can be computed in closed-form. This provides flexibility in the selection of the privacy mechanism while optimizing privacy measures to attain a privacy-utility balance. 

Based on the parameterization of the unitary operator through the cepstral representation, a general class of $\smap$ filters for a given 
$n$-dimensional positive definite spectral density function $S$ can thus be defined as 
\be
\mcF_{S} = \{ S^+(\lambda) \, U(z) \, S^+(\lambda)^{-1} : U(z) = \exp \{ \sum_{k \in \ZZ} \Omega_k z^k \}, \; \Omega_k \in  \mathcal{S}_n \},
\lb{eq:s_map}
\ee
where $S^+(\lambda)$ is a square root of $S(\lambda)$ for each $\lambda \in [-\pi, \pi].$

\section{Privacy vs Utility for Multiple  Time Series}

The objective of a privacy mechanism is to transform a sensitive time series so as to mitigate disclosure risk, while also preserving its utility. In alignment with the approach presented in \cite{MRH2023}, we operate under the assumption that potential adversaries possess prior information about the sensitive series in question. We denote the sensitive series 
requiring protection as $\{\tilde{\mathbf{X}}_t\}$, and introduce auxiliary time series $\{\tilde{\mathbf{Z}}_t\}$ that encapsulate any knowledge that advanced attackers could employ to forecast the observed series.  Each of these time series -- the sensitive and the auxiliary -- are multivariate
of possibly different dimension, and has
a time-varying mean function.  We write the
de-meaned processes without a tilde, i.e.,
\be
\begin{pmatrix} \tilde{\mathbf{X}}_t\\\tilde{\mathbf{Z}}_t\end{pmatrix} = \begin{pmatrix} \mu^\mathbf{X}_t \\ \mu^\mathbf{Z}_t\end{pmatrix} + \begin{pmatrix} \mathbf{X}_t \\\mathbf{Z}_t\end{pmatrix},
\lb{eq:obs_model}
\ee
where $\{ \mathbf{X}_t, \mathbf{Z}_t \}$ are jointly stationary with spectral density matrix 
\be
 S_{\mathbf{X},\mathbf{Z}}(\lambda) = \begin{pmatrix}
S_\mathbf{X}(\lambda) & S_\mathbf{XZ}(\lambda)\\
S_\mathbf{ZX}(\lambda) & S_\mathbf{Z}(\lambda)
\end{pmatrix},
\lb{eq:specmat}
\ee
and $\{ \mu^\mathbf{X}_t, \mu^\mathbf{Z}_t \}$ are the deterministic time-varying mean functions.  We assume that these mean functions
are interpretable as trend components, and
can be   represented by deterministic functions in $t$.  Above, we use the notation 
$S_{\mathbf{X},\mathbf{Z}}$ to denote the
joint spectral density of 
$\{ \mathbf{X}_t \}$ and $\{ \mathbf{Z}_t \}$,
whereas $S_\mathbf{XZ}(\lambda)$ is their
cross-spectral density, i.e.,
$S_\mathbf{XZ}(\lambda)= \sum_{h \in \mathrm{Z}} 
 e^{-i h \lambda } \Gamma_{\mathbf{XZ}}(h)$
 for $\lambda \in [-\pi, \pi]$, where
$\Gamma_{\mathbf{XZ}}(h)= \mbox{Cov} (\mathbf{X}_{t+h},\mathbf{Z}_{t})= E(\mathbf{X}_{t+h} {\mathbf{Z}_{t}}^{\prime})$
are the cross-covariances of 
  $\{ \mathbf{X}_t \}$ and $\{ \mathbf{Z}_t \}$.

\subsection{Second-Order Utility}
We suppose that the spectral matrix $S_\mathbf{X,Z}$ is well-known to both the data-publishing agency and potential adversaries engaged in what we term an ``augury'' attack. This scenario represents an idealized context for attackers, characterized by an external source of information $\{\mathbf{Z}_t\}$. The publishing agency applies some ``privacy  mechanism'' to $\{\mathbf{X}_t\}$, thereby  producing $\{\mathbf{Y}_t\}$, which is viewed  as a proxy for the sensitive data that 
 preserves some features of interest.  
 The preservation of the autocorrelation structure of $\{\mathbf{X}_t\}$ is
 referred to as {\it second-order utility},
 and mathematically is the requirement that
 $\Gamma_{\mathbf{X}} (h) = \Gamma_{\mathbf{Y}} (h)$ for all $h  \in \mathrm{Z}$.
 This is equivalent to the requirement
 that $S_\mathbf{Y} = S_\mathbf{X}$;
 clearly, one such privacy mechanism that
 preserves second-order utility is
 all-pass filtering via $\sxmap$ filters.

\subsection{Multivariate Linear Incremental Privacy (m-LIP)}

In this subsection we formally develop our
measure of privacy.  We employ the 
following notation:  $\{\mathbf{Z}_t\}$   denotes the stationary time series of   auxiliary information, and $\mathbf{Z} = (\mathbf{Z}_1, \ldots, \mathbf{Z}_T)^{\prime}$   denotes the vector of the attacker's knowledge over the observation period $1, 2, \ldots, T$. 
We denote the average integral over $[-\pi, \pi]$ of frequency-domain functions $u$ and $S$
via $\langle u, S\rangle =
{(2 \pi)}^{-1} \int_{-\pi}^{\pi} u(\lambda) S(\lambda)^* d \lambda$. When $u(\lambda) = 1$, we simply denote the average as $\langle S\rangle$. 

Consider  a scenario where we have random vectors $\mathbf{X}$, $\mathbf{Y}$, and $\mathbf{Z}$. In the context of minimizing mean squared error (MSE) loss, the best estimate of $\mathbf{X}$ given the attacker's information $\mathbf{Z}$ is the conditional expectation denoted as $E[\mathbf{X} \vert \mathbf{Z}]$. If we publish $\mathbf{Y}$, then an updated attack that incorporates the additional information from $\mathbf{Y}$ can be expressed as $E[\mathbf{X} \vert \mathbf{Y}, \mathbf{Z}]$. For linear estimators
(which are conditional expectations if
the random vectors are jointly Gaussian), this update takes the form:
\[
E[\mathbf{X} \vert \mathbf{Y}, \mathbf{Z}]  = E[\mathbf{X} \vert \mathbf{Z}] + \mbox{Cov} [ \mathbf{X}, \mathbf{Y} \vert \mathbf{Z} ] \, { \mbox{Var} [ \mathbf{Y} \vert \mathbf{Z} ] }^{-1} \, (\mathbf{Y} - E [ \mathbf{Y} \vert \mathbf{Z} ]).
\]
The second term on the right accounts for the update to the attack resulting from the publication of $\mathbf{Y}$. We classify $\mathbf{Y}$ as ``private'' if this update equals zero for all variables $\mathbf{Z}$; in such cases, the release of $\mathbf{Y}$ does not aid the attacker in predicting $\mathbf{X}$. Calculating the MSE, we find:
\beqn \lb{main eqn}
\mbox{Var} [ \mathbf{X} \vert \mathbf{Z} ] - \mbox{Var} [ \mathbf{X} \vert \mathbf{Y}, \mathbf{Z}]
   =   \mbox{Cov} [ \mathbf{X}, \mathbf{Y} \vert \mathbf{Z} ] \, { \mbox{Var} [\mathbf{Y} \vert \mathbf{Z} ] }^{-1} \,  \mbox{Cov} [\mathbf{Y}, \mathbf{X} \vert \mathbf{Z} ].
\eeqn
Here, the left-hand side represents conditional variances of prediction of $\bX$ before and after the publication of $\mathbf{Y}$, with the difference indicating incremental vulnerability to the sensitive data. The right-hand side involves a non-negative definite matrix; this quantity equals zero when $\mathbf{Y}$ offers no assistance to the attack. Moreover, manipulation of (\ref{main eqn})
shows that $\mbox{Var} [ \mathbf{X} \vert \mathbf{Y}, \mathbf{Z}]$ is composed of the
block entries of the matrix $\mbox{Var} [ \mathbf{X}, \mathbf{Y} \vert \mathbf{Z} ]$.
In particular, $\mbox{Var} [ \mathbf{X} \vert \mathbf{Y}, \mathbf{Z}]$ is the Schur 
complement of $\mbox{Var} [ \mathbf{X}, \mathbf{Y} \vert \mathbf{Z} ]$, and hence is
itself non-negative definite.  From this fact,
it follows that
\[
\det \mbox{Var} [ \mathbf{X} \vert \mathbf{Z} ]
\geq \det \left[ \mbox{Cov} [ \mathbf{X}, \mathbf{Y} \vert \mathbf{Z} ] \, { \mbox{Var} [\mathbf{Y} \vert \mathbf{Z} ] }^{-1} \,  \mbox{Cov} [ \mathbf{Y}, \mathbf{X} \vert \mathbf{Z}] \right],
\]
 which in turn  motivates the following 
 definition of  ``privacy measure'':
\be \mathcal{P}(\mathbf{X},\mathbf{Y},\mathbf{Z}) 
= 1 -  \frac{ \det \left[ \mbox{Cov} [ \mathbf{X}, \mathbf{Y} \vert \mathbf{Z} ] \, { \mbox{Var} [\mathbf{Y} \vert \mathbf{Z} ] }^{-1} \,  \mbox{Cov} [ \mathbf{Y}, \mathbf{X} \vert \mathbf{Z}] \right]}{ \det  \mbox{Var} [ \mathbf{X} \vert \mathbf{Z} ] }.
\lb{eq:prvcy}
\ee
  The preceding discussion shows that
  the privacy measure takes values in $[0,1]$,
  and is well-defined unless $\det \mbox{Var} [ \mathbf{X} \vert \mathbf{Z} ] = 0$, which corresponds to a trivial case where the attacker already possesses the sensitive information, making privacy unattainable. Otherwise, this measure can be viewed as one minus a function of the multivariate squared conditional correlation, analogous to the familiar $R^2$ statistic from linear models.

The definition (\ref{eq:prvcy}) is 
appropriate for random vectors, or finite samples of multivariate time series,
 but we wish to develop a privacy measure for 
time series processes (irrespective of
sample size).  Next, we 
formulate a result analogous to 
(\ref{main eqn}) for stationary time series.
To do so, we focus on the sensitivity of
$\mathbf{X}_t$ given the new information 
$\mathbf{Y}_t$ over the available information
$\{ \mathbf{Z}_t \}$.  It is easy to show,
similar to (\ref{main eqn}), that
\[
\Delta \mbox{VAR}
   =   \mbox{Cov} [ \mathbf{X}_t, 
   \mathbf{Y}_t \vert \{ \mathbf{Z}_t \} ] 
   \, { \mbox{Var} [\mathbf{Y}_t \vert 
   \{ \mathbf{Z}_t \} ] }^{-1} \,  
   \mbox{Cov} [\mathbf{Y}_t, \mathbf{X}_t
   \vert \{ \mathbf{Z}_t \} ],
\]
where by definition $\Delta \mbox{VAR} =  \mbox{Var} [ \mathbf{X}_t \vert 
 \{ \mathbf{Z}_t \} ] 
 - \mbox{Var} [ \mathbf{X}_t \vert \mathbf{Y}_t, \{ \mathbf{Z}_t \} ]$ is the reduction in the conditional variance matrix from the added knowledge of the released series. 
The following result provides formulas
for these conditional variances and 
covariances, and provides the basis
for a privacy measure for stochastic processes that takes values in $[0,1]$.

\begin{prop} 
 \lb{prop:prvcy_measure}
  Let $\{\mathbf{X}_t\}$, $\{ \mathbf{Y}_t \}$, 
 and $ \{ \mathbf{Z}_t\}$ be weakly stationary 
 multivariate time series that are also
 jointly weakly stationary, where the cross-spectral densities are $S_{\mathbf{X} \mathbf{Y} }$, $S_{\mathbf{X} \mathbf{Z} }$,
 and $S_{\mathbf{Y} \mathbf{Z} }$.
Further, define the conditional spectral densities via 
\begin{align*}
   S_{ \mathbf{X} |\mathbf{Z}}
    & = S_{  \mathbf{X} }
    - S_{\mathbf{X} \mathbf{Z} }
    { S_{\mathbf{Z}}}^{-1} 
    S_{\mathbf{Z}  \mathbf{X}}, \\
   S_{ \mathbf{Y} |\mathbf{Z}}
    & = S_{  \mathbf{Y} }
    - S_{\mathbf{Y} \mathbf{Z} }
    { S_{\mathbf{Z}}}^{-1} 
    S_{\mathbf{Z}  \mathbf{Y}}, \\
    S_{\mathbf{X} \mathbf{Y} |\mathbf{Z}}
    & = S_{\mathbf{X} \mathbf{Y} }
    - S_{\mathbf{X} \mathbf{Z} }
    { S_{\mathbf{Z}}}^{-1} 
    S_{\mathbf{Z}  \mathbf{Y}}.
\end{align*}
 Then the following formulas for
 conditional variances and covariances hold:
\begin{align*}
     \mbox{Var} [ \mathbf{X}_t \vert 
 \{ \mathbf{Z}_t \} ] & = \langle S_{ \mathbf{X} |\mathbf{Z}} \rangle, \\
  \mbox{Var} [ \mathbf{Y}_t \vert 
 \{ \mathbf{Z}_t \} ] & =
 \langle S_{ \mathbf{Y} |\mathbf{Z}} \rangle, \\
 \mbox{Cov} [ \mathbf{X}_t, 
   \mathbf{Y}_t \vert \{ \mathbf{Z}_t \} ] 
   & = \langle S_{\mathbf{X} \mathbf{Y} |\mathbf{Z}} \rangle.
\end{align*}
 Moreover, the scalar quantity 
\[
 1-\frac{ \det \left[  \langle 
 S_{\mathbf{X} \mathbf{Y} |\mathbf{Z}} \rangle {\langle  S_{\mathbf{Y}|\mathbf{Z}} \rangle }^{-1}  \langle  S_{\mathbf{Y} \mathbf{X} 
 |\mathbf{Z} } \rangle \right]}{ \det \langle S_{\mathbf{X}|\mathbf{Z}}\rangle  }
\]
takes values in $[0,1]$ if
 $\langle S_{X | Z} \rangle $ is positive definite.
\end{prop}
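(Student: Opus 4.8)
The plan is to split the proposition into two stages: first establish the three closed-form identities for the conditional (co)variances by a frequency-domain projection argument, and then obtain the $[0,1]$ bound by a Schur-complement / Loewner-order argument.

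For the first stage I would use the standard fact that, for jointly weakly stationary $\{\mathbf{X}_t\}$ and $\{\mathbf{Z}_t\}$, the best linear predictor of $\mathbf{X}_t$ based on the entire process $\{\mathbf{Z}_s : s \in \mathrm{Z}\}$ is a linear time-invariant filter $\widehat{\mathbf{X}}_t = \Lambda_{\mathbf{X}}(B)\mathbf{Z}_t$, and the normal equations $\mbox{Cov}[\mathbf{X}_t - \widehat{\mathbf{X}}_t, \mathbf{Z}_s] = 0$ for all $s$ become, in the frequency domain, $S_{\mathbf{X}\mathbf{Z}}(\lambda) = \Lambda_{\mathbf{X}}(z) S_{\mathbf{Z}}(\lambda)$, so $\Lambda_{\mathbf{X}}(z) = S_{\mathbf{X}\mathbf{Z}}(\lambda) S_{\mathbf{Z}}(\lambda)^{-1}$ (using a.e.\ invertibility of $S_{\mathbf{Z}}$, which is implicit once the conditional densities are defined). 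The residual $\mathbf{R}^{\mathbf{X}}_t = \mathbf{X}_t - \widehat{\mathbf{X}}_t$ is then orthogonal to the span of $\{\mathbf{Z}_t\}$, and applying the filter formula $(\ref{eq:allpass})$ to the stacked filter $(\bI, \,-\Lambda_{\mathbf{X}}(z))$ acting on $(\mathbf{X}_t, \mathbf{Z}_t)$ shows $\mathbf{R}^{\mathbf{X}}_t$ has spectral density exactly $S_{\mathbf{X}|\mathbf{Z}} = S_{\mathbf{X}} - S_{\mathbf{X}\mathbf{Z}} S_{\mathbf{Z}}^{-1} S_{\mathbf{Z}\mathbf{X}}$. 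Hence $\mbox{Var}[\mathbf{X}_t \mid \{\mathbf{Z}_t\}] = \mbox{Var}[\mathbf{R}^{\mathbf{X}}_t] = (2\pi)^{-1}\int_{-\pi}^{\pi} S_{\mathbf{X}|\mathbf{Z}}(\lambda)\,d\lambda$, which equals $\langle S_{\mathbf{X}|\mathbf{Z}}\rangle$ since $S_{\mathbf{X}|\mathbf{Z}}$ is itself a spectral density (so $S_{\mathbf{X}|\mathbf{Z}}(-\lambda) = S_{\mathbf{X}|\mathbf{Z}}(\lambda)^*$ and the integral is invariant under $\lambda \mapsto -\lambda$). The identity for $\mbox{Var}[\mathbf{Y}_t \mid \{\mathbf{Z}_t\}]$ is obtained identically, and for the cross term I would use $\mbox{Cov}[\mathbf{R}^{\mathbf{X}}_t, \mathbf{R}^{\mathbf{Y}}_t] = \mbox{Cov}[\mathbf{R}^{\mathbf{X}}_t, \mathbf{Y}_t]$ (orthogonality removes the predictor term), whose cross-spectral density is $S_{\mathbf{X}\mathbf{Y}} - S_{\mathbf{X}\mathbf{Z}}S_{\mathbf{Z}}^{-1}S_{\mathbf{Z}\mathbf{Y}} = S_{\mathbf{X}\mathbf{Y}|\mathbf{Z}}$, giving $\mbox{Cov}[\mathbf{X}_t, \mathbf{Y}_t \mid \{\mathbf{Z}_t\}] = \langle S_{\mathbf{X}\mathbf{Y}|\mathbf{Z}}\rangle$.

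For the second stage, the key observation is that stacking $\mathbf{W}_t = (\mathbf{X}_t^{\prime}, \mathbf{Y}_t^{\prime})^{\prime}$ and regressing out $\{\mathbf{Z}_t\}$ produces a genuine stationary residual process whose spectral density is the Schur complement of the $S_{\mathbf{Z}}$-block in the joint spectral density of $(\mathbf{W}_t, \mathbf{Z}_t)$, i.e.\ the $2 \times 2$ block matrix with diagonal blocks $S_{\mathbf{X}|\mathbf{Z}}$, $S_{\mathbf{Y}|\mathbf{Z}}$ and off-diagonal block $S_{\mathbf{X}\mathbf{Y}|\mathbf{Z}}$. Integrating over $[-\pi,\pi]$, the lag-zero covariance matrix $M$ of that residual process is an honest covariance matrix, hence real symmetric and non-negative definite; by the first stage its blocks are $\langle S_{\mathbf{X}|\mathbf{Z}}\rangle$, $\langle S_{\mathbf{Y}|\mathbf{Z}}\rangle$ and $\langle S_{\mathbf{X}\mathbf{Y}|\mathbf{Z}}\rangle$ (with $\langle S_{\mathbf{Y}\mathbf{X}|\mathbf{Z}}\rangle$ the transpose of the latter), and $M = \mbox{Var}[\mathbf{W}_t \mid \{\mathbf{Z}_t\}]$. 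Then, writing $A = \langle S_{\mathbf{X}|\mathbf{Z}}\rangle$, $B = \langle S_{\mathbf{Y}|\mathbf{Z}}\rangle$, $C = \langle S_{\mathbf{X}\mathbf{Y}|\mathbf{Z}}\rangle$ (so $\langle S_{\mathbf{Y}\mathbf{X}|\mathbf{Z}}\rangle = C^{\prime}$), with $A \succ 0$ by hypothesis and $B \succ 0$ implicit in forming $B^{-1}$: the Schur complement of $B$ in $M \succeq 0$ is $A - C B^{-1} C^{\prime} = \mbox{Var}[\mathbf{X}_t \mid \mathbf{Y}_t, \{\mathbf{Z}_t\}] \succeq 0$ (the process analogue of the rearrangement of $(\ref{main eqn})$ remarked just before the proposition), while $C B^{-1} C^{\prime} \succeq 0$. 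Thus $0 \preceq C B^{-1} C^{\prime} \preceq A$, so $A^{-1/2} C B^{-1} C^{\prime} A^{-1/2}$ has all eigenvalues in $[0,1]$, whence $\det(C B^{-1} C^{\prime})/\det A \in [0,1]$ and the displayed quantity $1 - \det(C B^{-1} C^{\prime})/\det A$ lies in $[0,1]$; it is well-defined because $\det\langle S_{\mathbf{X}|\mathbf{Z}}\rangle > 0$.

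I expect the main obstacle to be the care needed in the first stage rather than the linear algebra of the last: one must justify the existence and measurability of the infinite-history predictor filter, the a.e.\ invertibility of $S_{\mathbf{Z}}(\lambda)$, and keep the conjugate transposes in the $\langle \cdot \rangle$ operator straight (they are harmless here only because each $S_{\bullet|\mathbf{Z}}$ is a bona fide spectral density and integration over the full period symmetrizes). The genuinely load-bearing idea, by contrast, is the identification in the second stage of the integrated partial spectral-density matrix with an honest covariance matrix, which makes the positive semidefiniteness, and hence the Schur-complement inequality, automatic.
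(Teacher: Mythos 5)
Your proposal is correct and follows essentially the same route as the paper: the conditional moment formulas come from the optimal predictor filter $S_{\mathbf{XZ}}S_{\mathbf{Z}}^{-1}$ and the spectral density of the residual process, and the $[0,1]$ bound comes from non-negative definiteness of $\langle S_{\mathbf{X}|\mathbf{Z}}\rangle - \langle S_{\mathbf{XY}|\mathbf{Z}}\rangle\langle S_{\mathbf{Y}|\mathbf{Z}}\rangle^{-1}\langle S_{\mathbf{YX}|\mathbf{Z}}\rangle$ (a Schur complement of the stacked conditional covariance) together with determinant monotonicity. The only cosmetic difference is that you re-prove the step $A - B \succeq 0 \Rightarrow \det A \geq \det B$ via the eigenvalues of $A^{-1/2}BA^{-1/2}$, whereas the paper invokes it directly.
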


\begin{proof}
Let $E [ \mathbf{X}_t \vert \{ \mathbf{Z}_t \} ]$
denote the optimal linear predictor of
$\mathbf{X}_t$ given the whole process
$ \{ \mathbf{Z}_t \} $.  Then this can be
expressed as $\Pi (B) Z_t$ for some 
filter $\Pi (B)$ with frequency response 
function $\Pi (z) = S_{\mathbf{X} \mathbf{Z}}
(\lambda) { S_{ \mathbf{Z}} (\lambda) }^{-1}$
by Theorem 8.3.1 of \cite{Brillinger}.
It follows that the residual process
$\mathbf{X}_t - E [ \mathbf{X}_t \vert \{ \mathbf{Z}_t \} ]$ is stationary with 
spectral density 
\begin{align*}
S_{ \mathbf{X} \vert
  \mathbf{Z}  } (\lambda) = & S_{\mathbf{X} } (\lambda)
- \Pi (z) S_{ \mathbf{Z}\mathbf{X}} (\lambda)
- S_{ \mathbf{X}\mathbf{Z}} (\lambda) 
{ \Pi (z) }^{*}
+ \Pi (z) S_{ \mathbf{Z}} (\lambda) 
{ \Pi (z) }^* \\
& = S_{\mathbf{X} } (\lambda) - 
 S_{\mathbf{X} \mathbf{Z}}
(\lambda) { S_{ \mathbf{Z}} (\lambda) }^{-1}
S_{ \mathbf{Z}\mathbf{X}} (\lambda).    
\end{align*}
The residual process 
$\mathbf{Y}_t - E [ \mathbf{Y}_t \vert \{ \mathbf{Z}_t \} ]$ has an analogous
expression for its spectral density,
and the cross-spectral density between
the two residual processes is
\[
S_{ \mathbf{X} \mathbf{Y} \vert
  \mathbf{Z}  } (\lambda) = S_{\mathbf{X} \mathbf{Y}} (\lambda) - 
 S_{\mathbf{X} \mathbf{Z}}
(\lambda) { S_{ \mathbf{Z}} (\lambda) }^{-1}
S_{ \mathbf{Z}\mathbf{Y}} (\lambda). 
\]
Since the marginal variance of a  stationary process 
 is the average integral of its  spectral density,  the stated  variance and covariance formulas
 follow at once. Therefore we obtain  
\[
 \mbox{Var} [ \mathbf{X}_t \vert \mathbf{Y}_t, \{ \mathbf{Z}_t \} ]
=   \langle S_{\mathbf{X}|\mathbf{Z}}\rangle -
  \langle 
 S_{\mathbf{X} \mathbf{Y} |\mathbf{Z}} \rangle {\langle  S_{\mathbf{Y}|\mathbf{Z}} \rangle }^{-1}  \langle  S_{\mathbf{Y} \mathbf{X} 
 |\mathbf{Z} } \rangle,   
\] 
which is a non-negative definite matrix.
For any positive semi-definite matrices $A$ and $B$  of the same dimension, if $A-B \geq \mathbf{0}$ (i.e., the difference is non-negative definite), then $ \det A \geq \det B.$  Thus, setting $A = \mbox{Var} [ \mathbf{X}_t \vert \{\mathbf{Z}_t\}]$ and $B= \mbox{Var} [ \mathbf{X}_t \vert \{\mathbf{Z}_t\}] - \mbox{Var} [ \mathbf{X}_t \vert \{\mathbf{Y}_t\}, \{\mathbf{Z}_t\}]$ we find
that 
\[
\det \left[  \langle 
 S_{\mathbf{X} \mathbf{Y} |\mathbf{Z}} \rangle {\langle  S_{\mathbf{Y}|\mathbf{Z}} \rangle }^{-1}  \langle  S_{\mathbf{Y} \mathbf{X} 
 |\mathbf{Z} } \rangle \right] \leq  \det \langle S_{\mathbf{X}|\mathbf{Z}}\rangle,
\] 
 and the stated result follows.
\end{proof}

As an application of the above discussion,
we now consider $\{ \mathbf{Y}_t \}$ generated by
a linear filter-based privacy mechanism
$\Psi (B)$, i.e.,
\[ 
\mathbf{Y}_t = \Psi(B)X_t = \sum_j \Psi_j X_{t-j}. 
\]
Such a $\{ \mathbf{Y}_t \}$ clearly
satisfies the conditions of Proposition
\ref{prop:prvcy_measure}, and therefore
facilitates the following privacy definition.

\begin{Definition} [$\mlip$]
Let $\{\mathbf{X}_t, \mathbf{Z}_t\}$ be jointly stationary multivariate time series with spectral matrix \eqref{eq:specmat}, and positive definite  Schur complement $S_{ \mathbf{X} \vert 
 \mathbf{Z} }$. 
Then the {\it multivariate Linear Incremental Privacy} ($\mlip$) of $\{\mathbf{X}_t\}$ given $\{\mathbf{Z}_t\}$  with respect to the linear filtering mechanism $\Psi$ is defined as 
\beqn \lb{m-LIP}
\mlip(\Psi, S_{\mathbf{X} |\mathbf{Z}})  
 =  1-\frac{ \det \left[ \langle S_{\mathbf{X}|\mathbf{Z}}, \Psi \rangle
 \langle  \Psi S_{\mathbf{X}|\mathbf{Z}}, \Psi \rangle^{-1}\langle \Psi,  S_{\mathbf{X}|\mathbf{Z}}\rangle \right]}{ \det \langle S_{\mathbf{X}|\mathbf{Z}}\rangle  }.
\eeqn
\end{Definition}

Note that $\mlip$ is  a multivariate extension of LIP; see \cite{MRH2023}. Observing that 
$ S_{\mathbf{X} \mathbf{Y} |\mathbf{Z}} (\lambda)
=  S_{\mathbf{X}  |\mathbf{Z}} (\lambda)
 { \Psi (z) }^*$, it follows from  Proposition \ref{prop:prvcy_measure}, because $S_{ \mathbf{X} \vert  \mathbf{Z} }$ is positive definite,
 that  $\mlip$ takes values in $[0,1]$.  The value of zero occurs when  $\mbox{Var} [ \mathbf{X}_t \vert \mathbf{Y}_t, \{ \mathbf{Z}_t \} ]$ is singular,
 corresponding to complete predictability  of $ \mathbf{X}_t$ on the basis of  $ \mathbf{Y}_t$ and $\{  \mathbf{Z}_t \}$;  since $S_{ \mathbf{X} \vert 
 \mathbf{Z} }$ is positive definite,  it follows that  $\mbox{Var} [ \mathbf{X}_t \vert \{ \mathbf{Z}_t \} ]$ is non-singular,  so that the culprit in disclosing  $\mathbf{X}_t$ is $\mathbf{Y}_t$,  and not $\{ \mathbf{Z}_t \}$.  On the other  hand, when $\mlip$ equals one
 it must be the case that  $\langle S_{\mathbf{X}|\mathbf{Z}}, \Psi \rangle
 \langle  \Psi S_{\mathbf{X}|\mathbf{Z}}, \Psi \rangle^{-1}\langle \Psi,  S_{\mathbf{X}|\mathbf{Z}}\rangle$  is singular, i.e., that  $ \mbox{Var} [ \mathbf{X}_t \vert 
 \{ \mathbf{Z}_t \} ] 
 - \mbox{Var} [ \mathbf{X}_t \vert \mathbf{Y}_t, \{ \mathbf{Z}_t \} ]$ is singular.  This means that $\mathbf{Y}_t$ incurs  no additional ability to predict  certain linear combinations of $\mathbf{X}_t$ over and above what is
 already furnished by $\{ \mathbf{Z}_t \}$.

 \subsection{Privacy-Utility Optimization}
We present a framework for constructing a privacy mechanism -- denoted as $\Psi$ -- that possesses favorable privacy and utility characteristics. In the context of the augury solution, any $S_{\mathbf{X}}$-MAP filter $\Psi$ guarantees perfect second-order utility. Consequently, the selection of $\Psi$ should primarily align with the minimum privacy requirements. In particular, we seek an ``optimal'' $\Psi$ to maximize the privacy metric $\mlip(\Psi, S_{\mathbf{X}|\mathbf{Z}})$:
\be
\Psi_{opt}  = \underset{\Psi} 
{\arg\max} \;\mlip (\Psi, S_{\mathbf{X} |\mathbf{Z}}).
\ee
The optimization is over the class of $\sxmap$ filters. Given that the objective function is a nonlinear non-convex function of the filter, the optimization is rendered feasible by narrowing the class of all-pass filters. We use the parameterized class $\mcF_S$ in \eqref{eq:s_map} as the set over which the  objective function is optimized. 
Thus, given a conditional spectral density $S_{\bX| \bZ}$, the optimal filter is defined as 
\be
\lb{eq:mlip-opt}
\Psi_{opt}
= \underset{\Psi \in \mcF_{S_{\bX|\bZ}}} 
{\arg\min}\frac{ \det \left[ \langle  S_{\mathbf{X}|\mathbf{Z}}, \Psi \rangle
\langle  \Psi S_{\mathbf{X}|\mathbf{Z}}, \Psi \rangle^{-1}\langle \Psi, S_{\mathbf{X}|\mathbf{Z}}\rangle \right]}{ \det \langle S_{\mathbf{X}|\mathbf{Z}}\rangle  }.   
\ee
Given that the $\smap$ filters in $\mcF_S$ are defined with respect to unitary matrices, the optimization effectively reduces to a search over the set of unitary operators $U(z)$. Consequently, parameterizing unitary operators via their cepstral representation \eqref{cepstral unitary}, we can perform the optimization over the Euclidean space. 

 \section{Feasible Implementation of m-LIP}
 \lb{spectral factorization}

In practice, selection of an optimal $\smap$ filter according to \eqref{eq:mlip-opt} is based upon a spectral density $S$ estimated from  the available data (or based on prior knowledge). To use the class of $\smap$ filters in \eqref{eq:s_map} one needs to obtain square roots of a positive definite spectral density. Thus, the spectral density 
estimation procedure   must  constrain the estimator to be positive definite. Subsequent to the estimation of the spectral density, the spectral square root factors $S^+$ need to be computed at each frequency. Then the optimal $\smap$ filter is obtained using optimization of the criterion \eqref{eq:mlip-opt} over the parametric class \eqref{eq:s_map}   defined based on the estimated spectral factor. Finally, the filter coefficients associated with the optimal filter need to be computed using the inverse Fourier transform of the filter. The following section describes the step-by-step process of implementing the $\mlip$ privacy mechanism to a a given data set consisting (after removal of smooth trend) of the multiple time series of interest $\{ \bX_t \}$ and a set of auxiliary time series $\{ \bZ_t \}$.  

\subsection{Positive Definite Estimation of Spectral Densities}
For implementation of the $\mlip$ via spectral density estimation it is imperative that ${\hat{S}}_{\mathbf{X,Z}}$'s -- and hence the Schur complement  ${\hat{S}}_{\mathbf{X}|\mathbf{Z}}$ -- be  positive definite. In particular, with nonparametric approaches we must be careful to ensure this positive definite property is exhibited in the spectral density estimate almost surely.

Any such spectral estimator yields a  
$\hat{S}_{\mathbf{X}}$-MAP filter rather than
a ${S}_{\mathbf{X}}$-MAP filter, and thus
there will be some degradation of second-order
utility due to statistical estimation error
of the spectral density; this is different
from the univariate case explored in 
\cite{MRH2023}, wherein an all-pass filter
can be constructed without knowing the 
spectral density of the input process.
However, it can be argued that the practical utility that practitioners care about is
based on the finite sample at hand, and
the preservation of {\it sample} autocovariances, i.e., 
$\hat{\Gamma}_{\mathbf X} (h) = 
\hat{\Gamma}_{\mathbf Y} (h)$ for all 
$h \in \mathrm{Z}$.  Such a ``sample'' -- or
feasible -- 
second-order utility is equivalent to
$\{ \mathbf{X}_t \}$ and $\{ \mathbf{Y}_t \}$
having the same periodogram. Hence, setting $\hat{S}_{\mathbf{X}}$ to be the periodogram would guarantee feasible second-order utility, but
unfortunately the multivariate periodogram
is a rank one matrix for all $\lambda$, and
hence violates our positive definite 
requirement.  Therefore, we recognize there
may be some feasible loss of sample utility   due to positive definite spectral density
estimation; however, as sample size increases
these estimates will be consistent for the true $\hat{S}_{\mathbf{X}, \mathbf{Z}}$, as will the sample autocovariances for the process' autocovariances, and thus for large sample sizes second-order utility will approximately hold.

Given detrended data $\{\bW_t\} = \{\bX_t, \bZ_t\}$, there are several different options for obtaining positive definite spectral density estimates. One option is to fit a parametric model, such as an order $p$ vector autoregressive process (or VAR($p$)), and use the spectral density of that model evaluated at the estimated parameters. Another option is to use a non-parametric estimator that is constrained to be positive definite. In this article, we use the non-parametric kernel estimator of $S_{\mathbf{X,Z}}$ proposed in \cite{Politis2011}.  In \cite{Politis2011}, the author uses a flat-top kernel because it is an infinite-order kernel, and therefore is capable of achieving higher-order accuracy. 
The disadvantage of flat-top kernels is that they are not necessarily positive semi-definite. For this reason, the author lets $\epsilon_T>0$ be some chosen sequence decreasing to zero as $T \to \infty$, and truncates the eigenvalues of the flat-top taper estimator to $[\epsilon_T, \infty).$ 

We choose $\epsilon_T=1/T$ here and employ the flat-top taper method on the sample autocovariances to get a positive definite (PD) estimator. Let ${\hat{S}}_{\bX, \bZ}(\lambda)$ be the flat-top taper PD estimator of the residual spectral density obtained using $\epsilon_T=1/T$ for a sample of size $T.$ The top left block of the estimator will be denoted as ${\hat{S}}_{\bX}$, and is the PD estimator of $S_{\bX}$, and the Schur complement ${\hat{S}}_{\bX |\bZ} $ will be the estimator of the residual spectral density.

\subsection{Spectral Factorization}
The multivariate spectral factorization problem is  fundamental  in spectral analysis, wherein the objective is to obtain a vector moving average (VMA) representation of order q that corresponds to a given set of  autocovariances, denoted as $\Gamma(0),\dots, \Gamma(q-1), \Gamma(q)$. The requirement is that 
$\sum_{|h| \leq q} \Gamma(h)e^{-i\lambda h}$
must be positive definite for all values of the frequency parameter $\lambda \in [-\pi, \pi]$. 

There are several available methods for spectral factorization; we follow the method of  Bauer
 \cite{Bauer1955}, as
 summarized in \cite{McElroyJTSA2018}.
 First, we approximate the
 spectral density $S(\lambda)$ by
 $\sum_{|h| \leq q} \Gamma(h)e^{-i\lambda h}$
 for $q$ large; for simplicity of exposition,
 suppose this holds exactly, i.e.,
\[
S(\lambda)= \sum_{h=-q}^{q} \Gamma(h) e^{-i\lambda h}.
\]
Bauer's method first forms the block Toeplitz
covariance matrix of a time series sample
of length $m$ (where $m$ is taken as large
as computationally feasible), and secondly the 
modified Cholesky decomposition (MCD) 
is computed.  The lower left block row
of the Cholesky factor consists
(as $m \longrightarrow \infty$) of the 
autocovariances $\Gamma (q), \Gamma (q-1),
 \ldots, \Gamma (0)$, as described in
 \cite{McElroyJTSA2018}.  
Then the spectral factorization can be concisely represented as
\[
S(z)= S^{+}(z)\, { S^{+}(z) }^* = \Theta(z)\, \Sigma \,{ \Theta(z) }^*,
\]
where the spectral factor $S^{+}(z)$ assumes the form $S^{+}(z)= \Theta(z) \, \Sigma^{1/2}$.
Here $\Theta( B)=
\sum_{k=0}^{q}\Theta_k B^k$ is an order $q$ matrix polynomial in $B$
such that $\Theta_0 = \bI$,
and whose coefficients are the VMA
coefficients.  Also, $\Sigma$ is   the covariance matrix of the innovations.  
The spectral factor ${\hat{S}}_{\bX}^{+} (\lambda)$ obtained from using the Bauer algorithm on the flat-top taper PD estimator ${\hat{S}}_{\bX}(\lambda)$ is used in the design of $\smap$ filters.

\subsection{Parameterization of the $\smap$ Class}

Once the estimated spectral factor ${\hat{S}}_{\bX}^{+} (\lambda)$ has been obtained, one can construct the parametric class $\mcF_S$ of $\smap$ filters given in \eqref{eq:s_map}
by setting $S = {\hat{S}}_{\bX}^{+} (z).$
The free parameters of the class are obtained from the matrices $\Omega_k$ in the cepstral representation  $U(z) = \exp\{\Omega(z)\} = \exp\{\sum_{k \in \ZZ} \Omega_k z^k \}$
of the unitary operator. 
We can parameterize $\Omega(z)$ by allowing
the matrix entries of $\Omega_{k}$ for $k > 0$
to be any real number, and for $k < 0$
we set $\Omega_{k} = -\Omega_{-k}^{\prime}$.
For $k =0$, we only need to constrain 
$\Omega_{0}$ to be skew-symmetric, which is 
achieved by freely parameterizing the lower
triangular portion of the matrix, and enforcing
that the upper triangular portion to be equal
to the negative transpose of the lower portion
(and the diagonal entries are zero). 
For feasible implementation, we need to truncate the Laurent series $\Omega (z)$ at a finite stage, say $r$. Thus, the class of filters $\Psi(z)$ that we are choosing to optimize over are of the form  
\be
\Psi_r(z) = {\hat{S}}_{\bX}^{+} (\lambda)\exp\{\sum_{k= -r}^r \Omega_k z^k\} {\hat{S}}_{\bX}^{+} (\lambda)^{-1}, 
\lb{eq:smap_r}
\ee
where $\Omega_{-k} = -\Omega_k^T$ for all $k \geq 0.$
The truncation stage $r$ has to be chosen by the data curator, and can be done by examining the optimal privacy value for several different choices of $r$.  Given $r$, the number of free parameters in the class is $n_r = rn^2 + \binom{n}{2}$, which is linear in the cepstral length $r$ and quadratic in $n$.

\subsection{Optimal All-pass Filter Selection}
\lb{all-pass optim} 
In view of the filters described in \eqref{eq:smap_r},
the criterion (\ref{eq:mlip-opt}) can be optimized with respect to the $n_r$ free parameters in $\Omega_0, \Omega_1, \ldots, \Omega_r.$
However, the complicated nature of the $\mlip$ objective function precludes an analytical solution,
and we instead proceed via non-linear optimization techniques.

Our numerical method leverages an optimization algorithm known as AGMsDR \cite{AGMsDR} that is suitable for nonlinear nonconvex optimization.
While conventional optimization techniques like Brent or L-BFGS typically yield dependable results, our preference for AGMsDR stems from its specialized capability to address non-convex and non-smooth functions. Although our objective function is not inherently non-smooth, its non-convex nature makes the AGMsDR algorithm particularly attractive. Additionally, this method proves valuable in situations where  more commonly employed methods may encounter convergence issues.

Consider the cepstral series $\Omega (z)$
truncated to some order $r$, so that
\[
\Omega (z) = \sum_{k=-r}^r \Omega_k z^k
 = \Omega_0 + \sum_{k=1}^r \Omega_k z^k
  - \sum_{k=1}^r \Omega_k^{\prime} z^{-k}.
\]
Let $\vartheta$ denote the vector of $n_r$ real parameters corresponding to the  entries of the cepstral matrices $\Omega_k$ for $k= 0, 1, \ldots, r$. The unitary  operator $U(z)$ then becomes a function of the free parameters, and we denote it as $U(z;\vartheta).$  Also, let $\Psi_r(z; \vartheta) =  S^{+}_\mathbf{X} \, U(z; \vartheta) \,{ S^{+}_\mathbf{X} }^{-1}.$
Then the solution to the optimal filter problem (\ref{eq:mlip-opt}) can be  re-expressed as
\begin{equation} 
\label{eq:mlip-crit-final}
\vartheta_{opt} =  \underset{\vartheta} 
{\arg\min} \frac{ \det \left[ \langle  S_{\mathbf{X}|\mathbf{Z}}, 
 \Psi_r(z; \vartheta) \rangle
\langle \Psi_r(z; \vartheta) S_{\mathbf{X}|\mathbf{Z}}, \Psi_r(z; \vartheta) \rangle^{-1}\langle  \Psi_r(z; \vartheta), S_{\mathbf{X}|\mathbf{Z}}\rangle \right]}{ \det \langle S_{\mathbf{X}|\mathbf{Z}}\rangle  },
\end{equation}
with $\Psi_{opt}(z) =  \Psi_r(z; \vartheta_{opt}).$
For initialization of the $\vartheta$ parameters  we draw a random sample of size $n_r$ from the standard normal distribution, and set the initial values equal to the obtained sample. 
After the optimal filter $\Psi_{opt}(z)$ has been determined, the filter coefficients are obtained by Fourier inversion: $\Psi_k = \langle z^{-k} , \Psi_{opt}(z) \rangle$.

\subsection{Estimation of  Trend and Forecast Extension}
 Before the application of the estimated filter to the data, the deterministic trend needs to be estimated and removed from the multiple time series. Then   after the application of the filter, the estimated trend is added back to the privatized times series. 

Trend estimation can be done using different available software. For this article, we used the differencing method to achieve the detrended series using the $diff()$ function (details in Section \ref{data analysis}). After the removal of trends from each of the series, we obtain the detrended data, which is then used for filtering. The filter is two-sided and of finite length, say $M$ on each side. To get a series with the same length as the original data after filtering, We extend the detrended series by $M$ time points on each side by using one-sided forecasts. Since we are assuming that the spectral density is known for the original series, we use this same spectral density  to generate optimum one-sided $h-$ ahead forecasts for $h - 0, 1, \ldots, M.$ After we obtain the filtered series by applying the filter  $\smap$ to the detrended series, we add back the estimated trends. A privatized series with a trend is thereby generated.

\subsection{ Realized Utility}

Due to the error that occurred during spectral estimation, and due to finite sample effects, there can be utility loss; we measure this loss through the Frobenius norm, which for a complex matrix A is defined via $\parallel A \parallel=\sqrt{\mbox{tr} (AA^*)}$. 
The  Frobenius Discrepancy (FD) (see \cite{McElroyRoy2021})
of the two $n$-variate spectral density matrices $S_{\mathbf{X}}$ and $S_{\mathbf{Y}}$
is the average (over frequencies) of
the squared Frobenius norm of their difference, viz.
\[
\text{FD}(S_{\mathbf{X}},S_{\mathbf{Y}})= \langle \parallel S_{\mathbf{X}}-S_{\mathbf{Y}}\parallel^2\rangle.
\]
A property of FD is
\[
\text{FD}(S_{\mathbf{X}},S_{\mathbf{Y}})=0 \; \text{if and only if} \; S_{\mathbf{X}} \overset{a.e.}{=} S_{\mathbf{Y}},
\]
where ``a.e.'' indicates that the two matrix-valued functions are equal at all frequencies
$\lambda \in [-\pi, \pi]$ except for a subset
of Lebesgue measure zero. The above property is referred to as the complete equivalency of $S_{\mathbf{X}}$ and $S_{\mathbf{Y}}$; since the discrepancy of the two spectral densities on a set of measure zero does not disrupt the equality of their corresponding autocovariances, it follows that complete equivalency entails second-order utility.  

Another expression for $\text{FD}(S_{\mathbf{X}},S_{\mathbf{Y}})$ is
$\sum_{h \in \mathrm{Z} } \parallel \Gamma_\mathbf{X} (h) - \Gamma_\mathbf{Y} (h) \parallel^2$, which makes the connection
to second-order utility more explicit.
When using FD to assess second-order utility
(low values corresponding to higher utility),
 it is convenient to use a normalized measure;
 to that end, we derive the upper bound
\[
\text{FD}(S_{\mathbf{X}}, S_{\mathbf{Y}})
\leq \sum_{h \in \mathrm{Z}}
 { \left( \| \Gamma_{\mathbf{X}}(h)\| +
  \| \Gamma_{\mathbf{Y}}(h) \|  \right) }^2.
\]
This is obtained using the triangle inequality
for the Frobenius norm.  
We use this upper bound to normalize the Frobenius discrepancy, obtaining the
so-called NFD:
\beqn
\lb{NFD}
\text{NFD}(S_{\mathbf{X}},S_{\mathbf{Y}})= \frac{\sum_{h \in \mathrm{Z}} 
\parallel \Gamma_\mathbf{X} (h) - \Gamma_\mathbf{Y} (h) \parallel^2 }{ 
\sum_{h \in \mathrm{Z}}
 { \left( \| \Gamma_{\mathbf{X}}(h)\| +
  \| \Gamma_{\mathbf{Y}}(h) \|  \right) }^2 }.
\eeqn
For two matrices $A$ and $B$, the bound
$\| A- B \| \leq \| A \| + \| B \|$
is achieved for $B = -A$, 
 which indicates that the maximum value
 of NFD is $1$.
An empirical version of NFD, denoted as
$\widehat{\text{NFD}}$, is obtained by substituting sample autocovariances in
(\ref{NFD}).  Finally, We
define the realized utility measure (RUM) via
\beqn
\lb{RUM}
\text{RUM}(S_{\mathbf{X}},S_{\mathbf{Y}})=1-\widehat{\text{NFD}}(S_{\mathbf{X}},S_{\mathbf{Y}}),
\eeqn
which has the property that high values
(close to unity) correspond to high utility
(i.e., when the FD is close to zero).
 Also, because NFD is bounded by one,
 low values of RUM correspond to low utility.

\section{Numerical Illustration}
\lb{numerical illustration}
In this section we apply the multivariate LIP methods to both simulated data and real data --
the QWI employment data published by U.S. Census Bureau. 

\subsection{Simulated Data}
Here we simulate data from a Vector Autoregressive Moving Average (VARMA) process of order (1,1), a VAR(1) with i.i.d. innovations, and a VAR(1) where the innovations are drawn from an Autoregressive
Conditionally Heteroscedastic (ARCH) process of
order 1 (for detailed discussion of VARMA and ARCH models, see \cite{brockwelldavis}).  These simulation
processes are used to jointly describe 
$\{ \mathbf{X}_t \}$ and $\mathbf{Z}_t$;
for the third case, the ARCH(1) innovations  correspond
to $\{ \mathbf{Z}_t \}$.

For obtaining the privatization filter $\Psi (B)$ in each case, we employ the following settings. For spectral density matrix estimation, we use the flat-top taper method described above. We obtain the spectral factorization for the joint spectral density of the target series that are the focal point of our protective measures. We then solve the minimization problem posed in (\ref{eq:mlip-crit-final}), using various choices of the order $r$ of 
$\Omega (z)$.  

When  $\Omega(z)$ equals the zero matrix $\mathbf{0}$, corresponding to $U(z) = \bI$, then the $\mlip$ criterion equals zero -- which makes sense since no privatization actually occurs. The choice  $r=0$ means that only $\Omega_0$ is present, and there is only $\binom{n}{2} = 1$ parameter -- the single lower triangular entry -- in $\vartheta$. Secondly,  $r = 1$ yields $4$ free parameters in $\Omega_1$, plus one free parameter in $\Omega_0$.  A third scenario keeps three of the four elements of $\Omega_1$ constant so that $\vartheta$ consists of two parameters -- one for $\Omega_0$, and one corresponding to the free parameter in $\Omega_1$. In each of these three scenarios, we minimize the criterion to obtain the optimal $\vartheta$ and the corresponding filter $\Psi (B)$.
 
 We plot the histograms of the realized privacy values for the VAR(1) and VARMA(1,1) simulation, and for those plots, we set $r=0$. For each of the three cases, we plot the comparisons of the autocorrelation and the cross-correlation functions of the original and the released series. For those plots, we use $r=1$ to obtain the optimal 
 $\sxmap$   filter.

\subsubsection{Simulation from VAR(1)}
 Here we describe the chosen parameter values for the simulation. 
 A VAR(p) model for $ \{ \mathbf{W}_t \}$ is defined as follows:
\[
\mathbf{W}_t = A_1 \mathbf{W}_{t-1} + A_2 \mathbf{W}_{t-2} + \ldots + A_p \mathbf{W}_{t-p} + \mathbf{\varepsilon}_t, 
\]
where  $A_1, A_2, \ldots, A_p$ are coefficient matrices for lags 1 through $p$, and $ \{ \varepsilon_t \} \sim \text{WN}(0, \Sigma)$.

We generate a time series of length $T=2000$ from a 4-variate VAR(1) model. The AR coefficient matrix is 
\[
A= \begin{pmatrix}
    0.5 & 0.1 & 0.0 & 0.0 \\
0.2 & 0.4 & 0.1 & 0.0 \\
0.1 & 0.2 & 0.6 & 0.2 \\
0.0 & 0.1 & 0.2 & 0.5
\end{pmatrix},
\]
which has all absolute eigenvalues   less than 1,
thereby ensuring stationarity and causality
of the process. The covariance matrix of the noise is assumed to be 
\[
\Sigma = \begin{pmatrix}
    1.0 & 0.2 & 0.1 & 0.0 \\
0.2 & 1.0 & 0.2 & 0.1\\
0.1 & 0.2 & 1.0 & 0.3\\
0.0 & 0.1 & 0.3 & 1.0
\end{pmatrix}.
\]

We divide  the 4-variate VAR(1) process into
two parts: 
the first two components correspond to
$\{ \mathbf{X}_t \}$, 
while the latter two components correspond
to $\{ \mathbf{Z}_t \}$.
Generating the process in this fashion serves the purpose of keeping the $\{ \mathbf{X}_t \}$ and $\{ \mathbf{Z}_t \}$ time series jointly stationary.

We generate the 4-dimensional VAR(1) time series multiple times (100 Monte Carlo copies), and obtain   optimal values of $\vartheta$ for various sample lengths and instances. We measure the time complexities, and report the average time complexity for each case. For the two-parameter and five-parameter cases the 
privacy filter resulted in maximal privacy
for almost all simulations.
In the one-parameter case ($r=0$) the privacy
measure was not clustered tightly around unity,
and we  report the histogram
in 
Figure \ref{histogram}.  To demonstrate
utility, we plot sample autocovariances
$\hat{\Gamma}_{ \mathbf{X}} (h)$ and
$\hat{\Gamma}_{ \mathbf{Y}} (h)$ for a
single simulation in Figure 
\ref{fig:original_vs_filtered_series_acvf_VAR(1)}, when $r=1$. 

\begin{figure}[t]
\centering
\includegraphics[width = 2.2in, height = 1.6in]{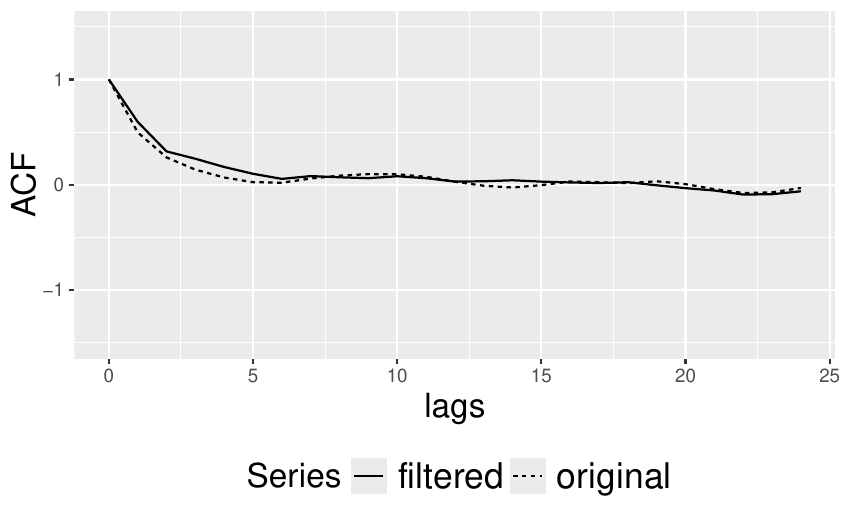}
\includegraphics[width = 2.2in, height = 1.6in]{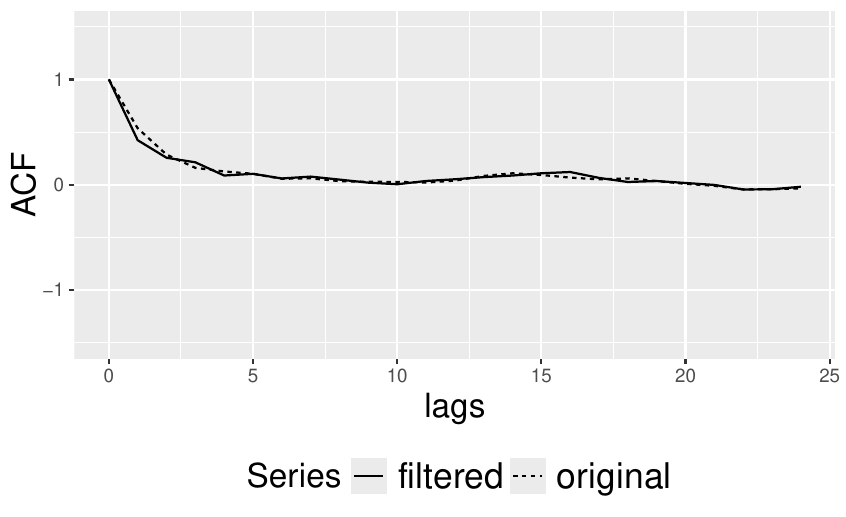}\\
\includegraphics[width = 2.2in, height = 1.6in]{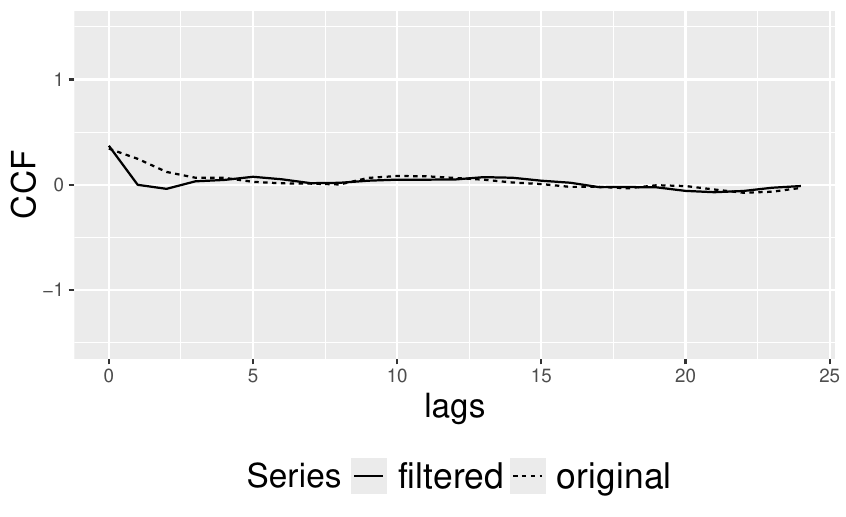}

\caption{Comparison of sample autocorrelation function (ACF) and the cross-correlation function (CCF) of the original and the filtered copies for the first and second series for the case $r=1$ (VAR(1)). The top row shows the two ACF plots while the bottom plot shows the CCF between the two series. 
\lb{fig:original_vs_filtered_series_acvf_VAR(1)} }
\end{figure}

\begin{figure}[h]
\centering
\includegraphics[width = 2.5in, height = 1.8in]{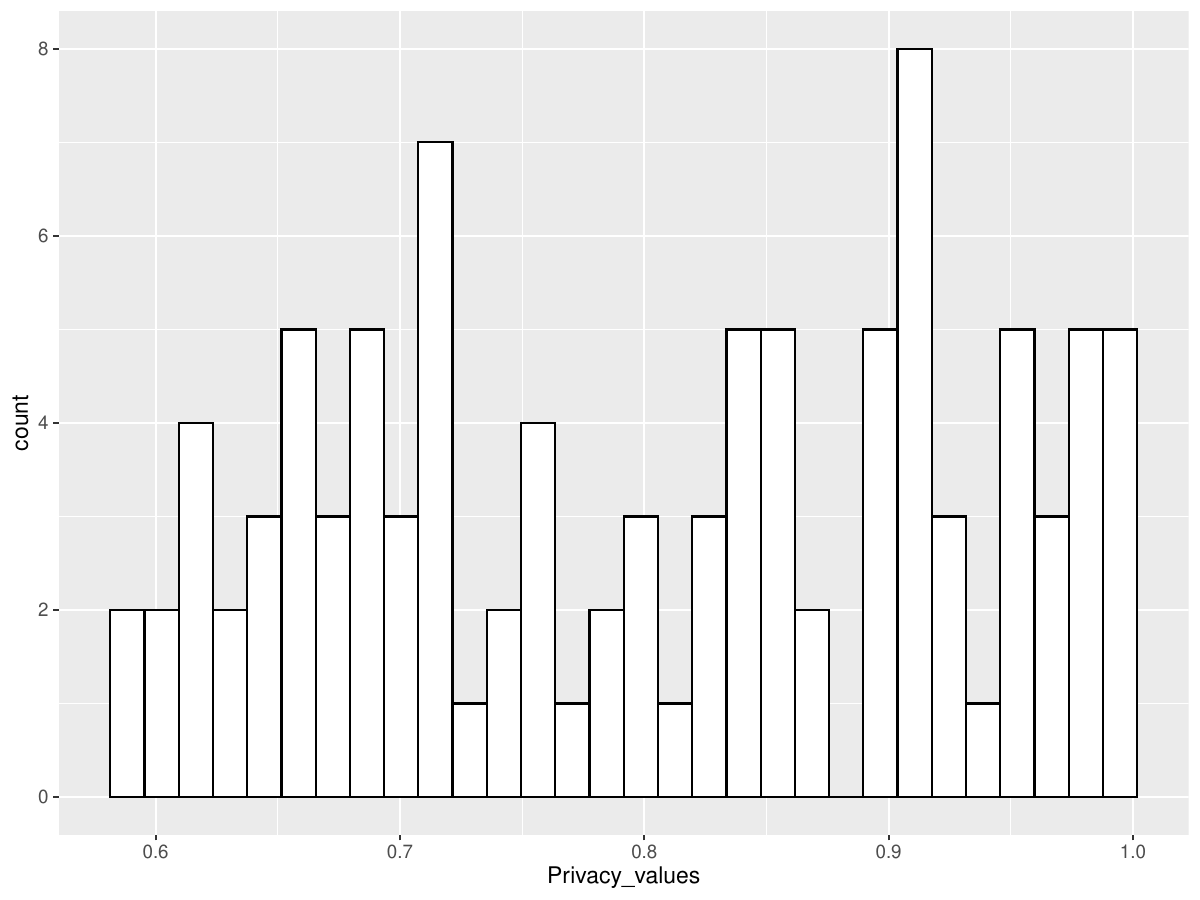}
\includegraphics[width = 2.5in, height = 1.8in]{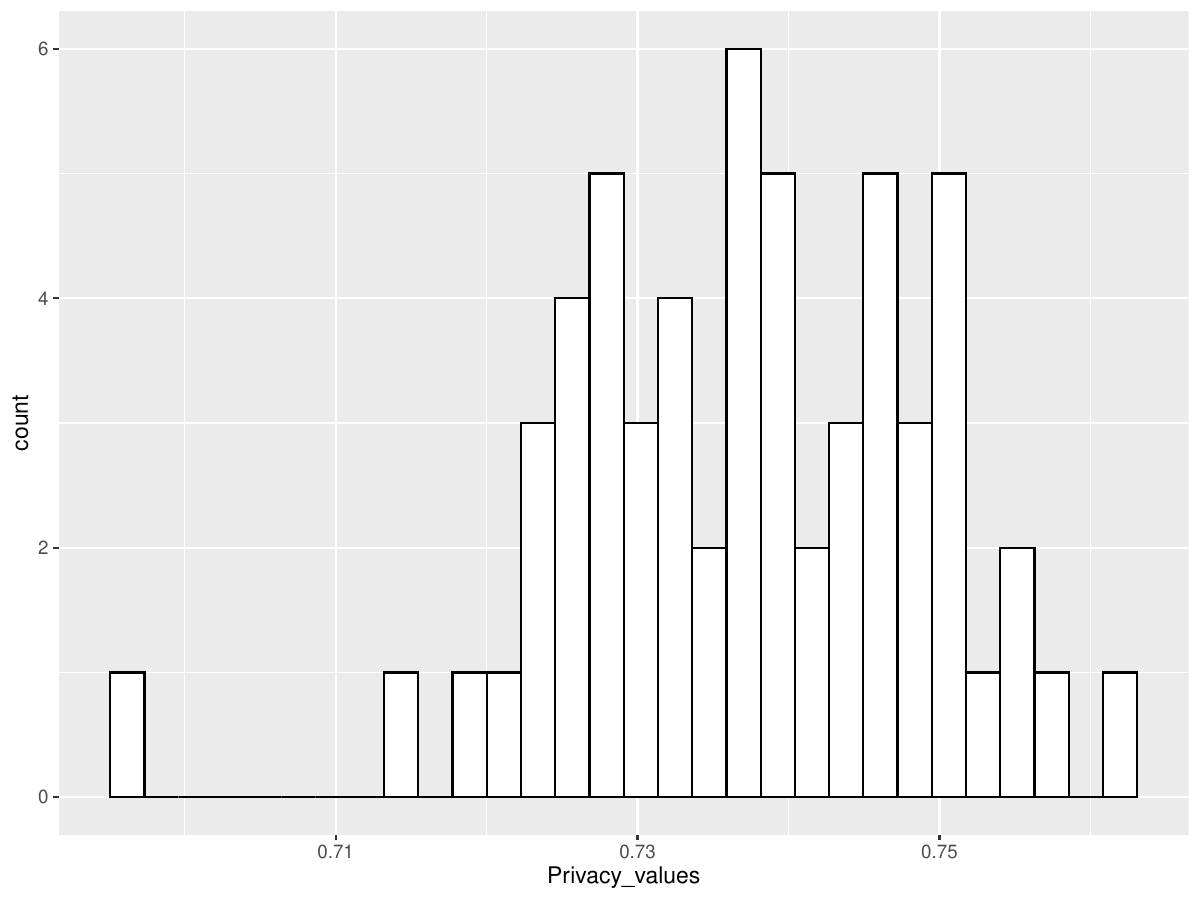}\\

\caption{Histograms of $\mlip$ values
 for VAR(1) (left) and VARMA(1,1) (right),
 $r=0$ case.
\lb{histogram} }
\end{figure}

\subsubsection{Simulation from VARMA(1,1)}
We generate a 4-variate VARMA(1,1) described by the following equation:
\[
\mathbf{W}_t = \Phi \mathbf{W}_{t-1} + \mathbf{\epsilon}_t + \Theta \mathbf{\epsilon}_{t-1}, 
\]
where $\{\mathbf{\epsilon}_t\}$ is a white noise process with innovation variance-covariance matrix 
\[ 
\Sigma = \begin{pmatrix}
    0.09 & 0 & 0 & 0\\
    0 & 0.03 & 0 & 0\\
    0 & 0 & 0.05 & 0\\
    0 & 0 & 0 & 0.07
\end{pmatrix}.
\] 
The coefficient matrices for the Autoregressive (AR) and Moving Average (MA) components are defined respectively as 
\[
\Phi = \begin{pmatrix}
   -0.00556 &-0.6353 &0.2529& -0.0096\\
   -0.2288  &0.3506 &0.2414 &-0.02505\\
   -0.23423 &-1.33007& 0.517& -0.1978\\
 0.1624 & 0.5523& 0.4042& -0.1412
\end{pmatrix}
\]
and 
\[
\Theta = 
\begin{pmatrix}
  \begin{matrix}
  0.6 & 0.2 \\
  0 & 0.3
  \end{matrix}
  & \rvline & \bigzero \\
\hline
  \bigzero & \rvline &
  \begin{matrix}
  0 & 0 \\
  0 & 0
  \end{matrix}
\end{pmatrix}.
\]
Both $\{ \mathbf{X}_t \}$ and $\{ \mathbf{Z}_t \}$
are defined from the VARMA(1,1) process in the same
manner as in the previous simulation.
We also construct our privatization filter using the same settings, and assess performance in the same way. In Figure \ref{fig:original_vs_filtered_series_acvf_VARMA}, we present a comparison of the sample 
  autocovariances
$\hat{\Gamma}_{ \mathbf{X}} (h)$ and
$\hat{\Gamma}_{ \mathbf{Y}} (h)$ for a
single simulation, for the case $r=1$. In Figure \ref{histogram} we present the histograms for the case $r=0$.

\begin{figure}[ht]
\centering
\includegraphics[width = 2.2in, height = 1.6in]{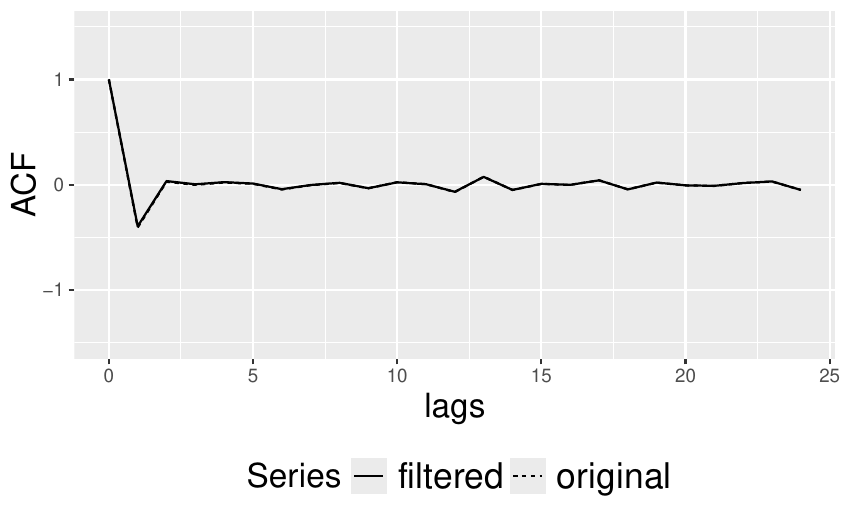}
\includegraphics[width = 2.2in, height = 1.6in]{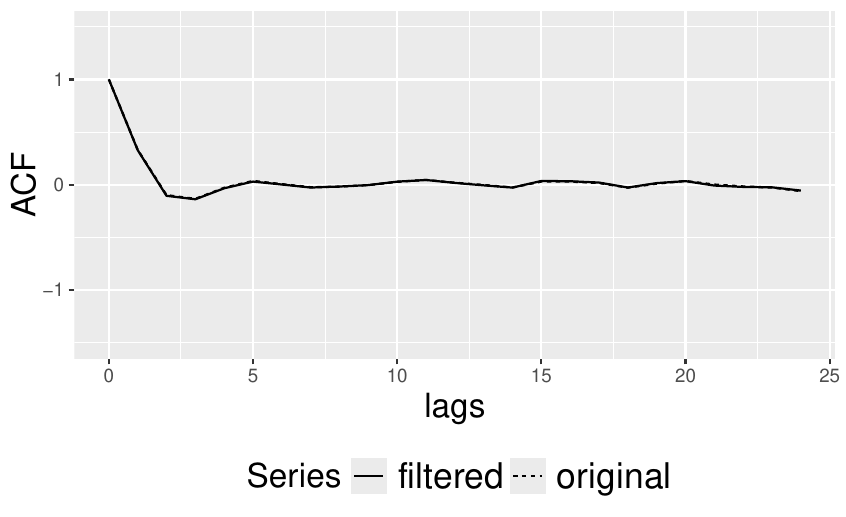}\\
\includegraphics[width = 2.2in, height = 1.6in]{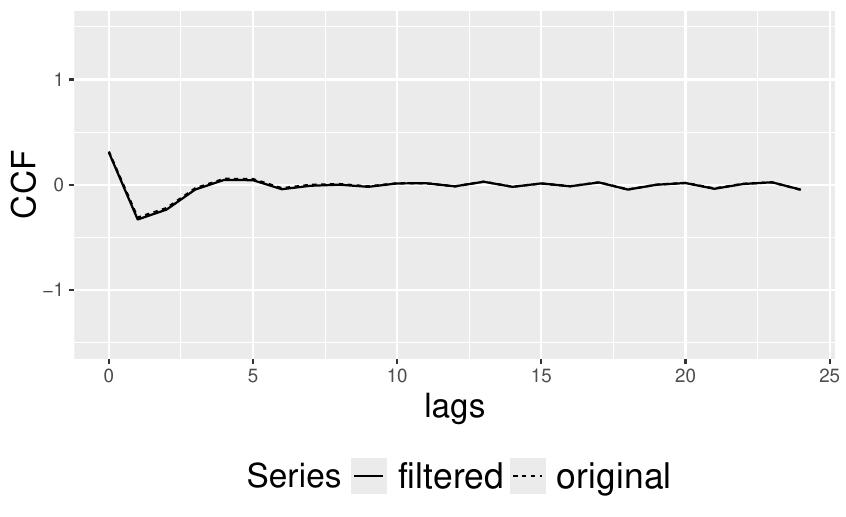}

\caption{Comparison of sample autocorrelation and the cross-correlation functions of the original and the filtered copies for the first and second series for the case $r=1$ (VARMA(1,1)). The top row shows the two ACF plots while the bottom plot shows the CCF between the two series. 
\lb{fig:original_vs_filtered_series_acvf_VARMA} }
\end{figure}

\subsubsection{Simulation of a VAR(1) with ARCH(1) errors}

We generate a bivariate VAR(1) following the equation
\[
\mathbf{Q}_t = A_1 \mathbf{Q}_{t-1}+ \mathbf{\zeta}_t,
\]
where $\mathbf{\zeta}_{t,1}=\sqrt{h_t}e_t$ and $e_t \sim$ i.i.d. standard normal.  $\mathbf{\zeta}_{t,1}$ stands for the first component of the innovation series $\mathbf{\zeta}_t$, i.e. $\mathbf{\zeta}_t=(\mathbf{\zeta}_{t,1}, \mathbf{\zeta}_{t,2})'$. Here, $h_t$ is defined by 
\[
h_t = \alpha_0 + \alpha_1 \mathbf{\zeta}^2_{t-1,1}.
\]
For our simulation we set $\alpha_0=1$ and $\alpha_1=0.5$. We assume $\mathbf{\zeta}_{t,2} \sim \mbox{WN} (0,1)$ and is drawn independently with respect to $\mathbf{\zeta}_{t,1}$. The series $\{\mathbf{Q}_t\}$ serves in the role of $\{\mathbf{X}_t\}$, where $\{\mathbf{Z}_t\}$ is the first component of the $\{\mathbf{\zeta}_t\}$. The autocorrelation comparison is plotted in Figure \ref{fig:original_vs_filtered_series_acvf_VARARCH}.

\begin{figure}[ht]
\centering
\includegraphics[width = 2.2in, height = 1.6in]{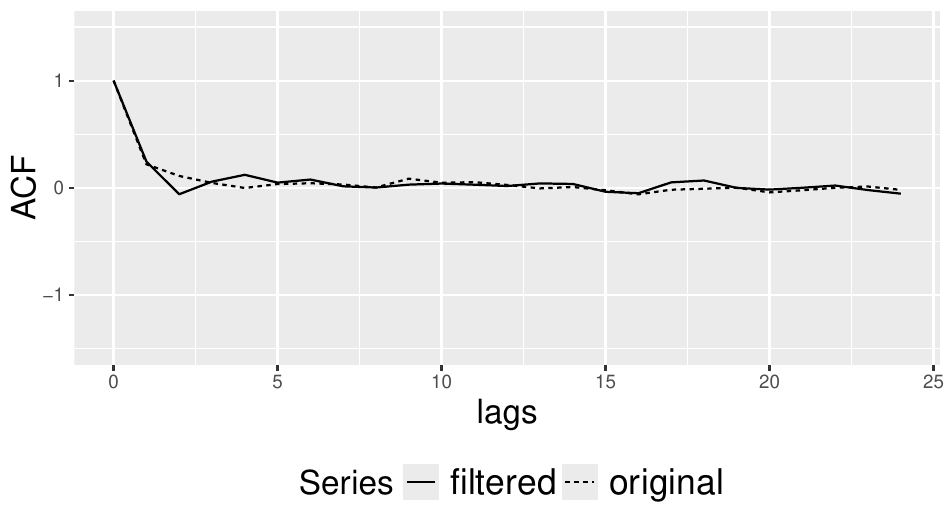}
\includegraphics[width = 2.2in, height = 1.6in]{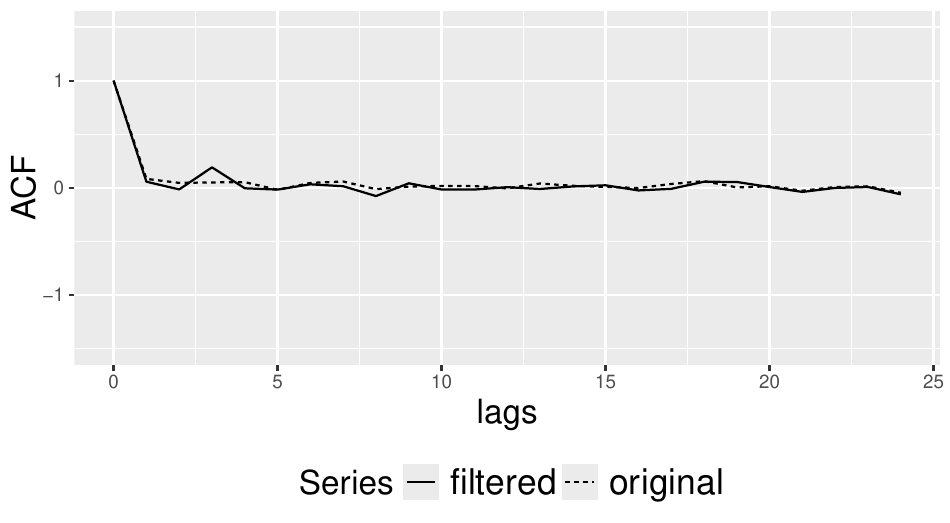}\\
\includegraphics[width = 2.2in, height = 1.6in]{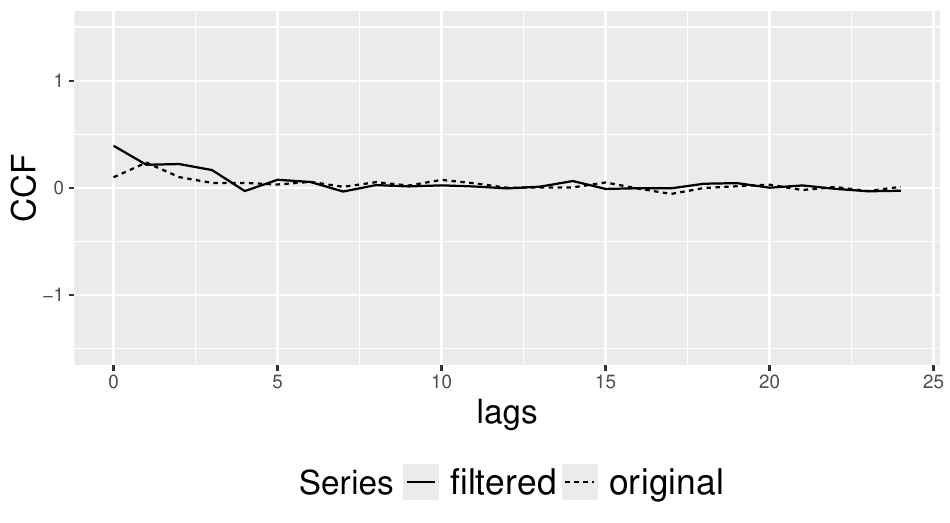}

\caption{Comparison of sample autocorrelation and the cross-correlation functions of the original and the filtered copies for the first and second series for the case $r=1$ (VAR(1), ARCH(1) error). The top row shows the two ACF plots while the bottom plot shows the CCF between the two series. 
\lb{fig:original_vs_filtered_series_acvf_VARARCH} }
\end{figure}

\begin{table}
\begin{adjustbox}{width=\columnwidth,center}
\begin{tabular}{|*{10}{c|}}  

\hline
\multicolumn{1}{|c}{} & \multicolumn{3}{|c}{VAR(1)} & \multicolumn{3}{|c}{VARMA(1,1)} & \multicolumn{3}{|c|}{VAR(1),ARCH(1) error} \\ \hline
Parameters & \multicolumn{1}{|c}{Min. Privacy} & \multicolumn{1}{|c}{Max. Privacy} & \multicolumn{1}{|c}{Time (avg)} & 
\multicolumn{1}{|c}{Min. Privacy} & \multicolumn{1}{|c}{Max. Privacy} & \multicolumn{1}{|c}{Time (avg)} & 
\multicolumn{1}{|c}{Min. Privacy} & \multicolumn{1}{|c}{Max. Privacy} & \multicolumn{1}{|c|}{Time (avg)}  \\ \hline
1 & 0.692 &  0.998 & 22.7  &  0.695 &  0.85 & 91.54 & 0.652 &  0.894 & 82.75              \\ \hline
2 & 0.786 & 0.9989 & 48.9 & 0.895 & 0.9991 & 89.391 & 0.7973 & 1 & 122.431                 \\ \hline
5 & 0.996 & 1& 339.76  & 0.9921 & 1& 432.567 & 0.9254 & 1& 323.698  \\ \hline
\end{tabular}
\end{adjustbox}
\caption{Privacy values and time complexities.
\lb{table1} }
\end{table}

\begin{table}
\begin{adjustbox}{width=\columnwidth,center}
\begin{tabular}{ |p{2cm}|p{1.5cm}|p{2.5cm}| p{5.5cm}| }
 \hline
 \multicolumn{4}{|c|}{$T=2000$} \\
 \hline
 Parameters & VAR(1) & VARMA(1,1) & VAR(1) with ARCH(1) error\\
 \hline
 1   &   0.968 & 0.92 &   0.976\\
 2 &   0.902 & 0.975   & 0.908\\
 5 & 0.935 & 0.999&  0.901\\
 \hline
\end{tabular}
\end{adjustbox}
\caption{Realized Utility Measure.
\lb{table:utility measure} }
\end{table}

\subsubsection{Comparison of The Three Simulations}
For each of the aforementioned three cases we generate 100 Monte Carlo copies of the coefficient series. For each of those instances we obtain privacy values. In Table \ref{table1}
we report the minimum privacy value, maximum maximum privacy value and the time taken on average for each of the cases (VAR(1), VARMA(1) and VAR(1) with ARCH(1) error, $T=2000$) for different number of parameters (so $\vartheta$ has
length $1$, $2$, or $5$). The average privacy value (average taken over the Monte Carlo simulations) for the VAR(1) when $r=0$ is $0.802$; for VARMA(1,1) the average privacy is $0.737$,  and for the third simulation it is $0.758$. The average maximum privacy value for 
$r=1$ (5 parameters) for the VAR(1) simulation is $0.9978$, whereas for the VARMA(1,1) it is $0.998$, and is  $0.9967$ in the third case. We also display the utility values defined in \ref{RUM} for different number of parameters in Table \ref{table:utility measure}.

\subsection{QWI Employment Data}
\lb{data analysis}

In our data analysis, we demonstrate the effectiveness of our method by utilizing employment count data obtained from the Quarterly Workforce Indicators (QWI) dataset published by the U.S. Census Bureau. The QWI dataset is derived from a comprehensive collection of job and work location administrative records spanning 49 states, and it is updated quarterly; see
\cite{abowd2011national} for full details on the data's construction and publication.

All data used in our analysis were retrieved from the QWI Explorer website \cite{QWI}
on January 28, 2024, at 10:00 pm. Our analysis centers on the quarterly indicator referred to as ``Beginning of Quarter Employment: Count,'' which we will abbreviate as ``employment count.'' The dataset covers the state of Maryland and spans from the first quarter of 1997 (Q1 1997) to the fourth quarter of 2022 (Q4 2022). Specifically, we have gathered data for four distinct counties within Maryland: Baltimore, Frederick, Montgomery, and Howard counties. 


Our objective is to safeguard the bivariate time series comprising employment counts for Baltimore and Frederick counties, with Montgomery and Howard counties constituting the series that may be known to potential attackers. The employment data spanning 26 years from the aforementioned four counties in Maryland are visualized in Figure \ref{fig:QWI Employment Count}.

\begin{figure}[htbp!]
\centering
\includegraphics[width = 2.8in, height = 2in]{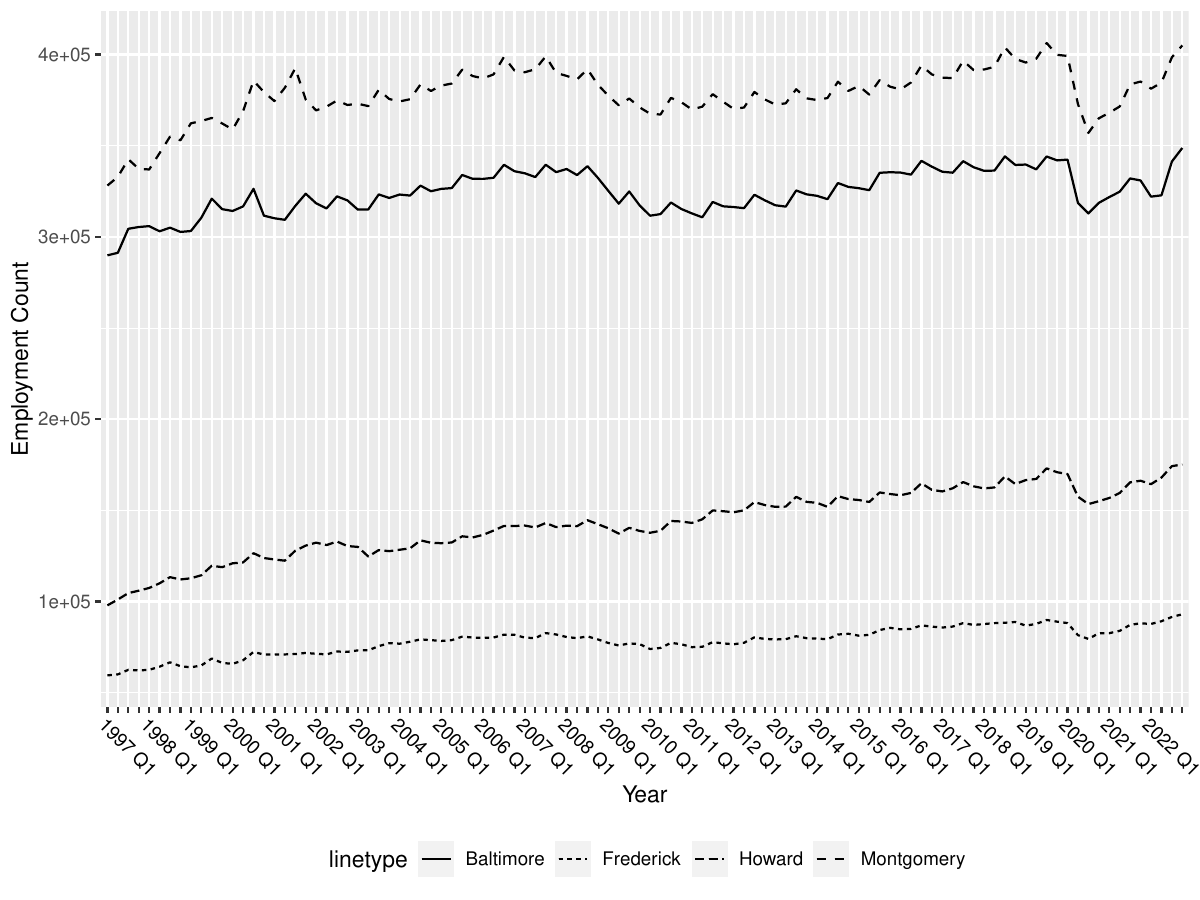}
\caption{QWI employment count for Maryland counties. 
\lb{fig:QWI Employment Count} }
\end{figure}


We remove trend and seasonal patterns 
from the quarterly data by
applying the seasonal differencing
operator $1- B^4$.  The resulting 
`annual growth rate' time series is stationary,
as is verified through visual inspection of the autocorrelation function and the application of
the augmented Dickey-Fuller test on each of the time series.


\begin{figure}[h]
\centering
\includegraphics[width = 2.2in, height = 1.6in]{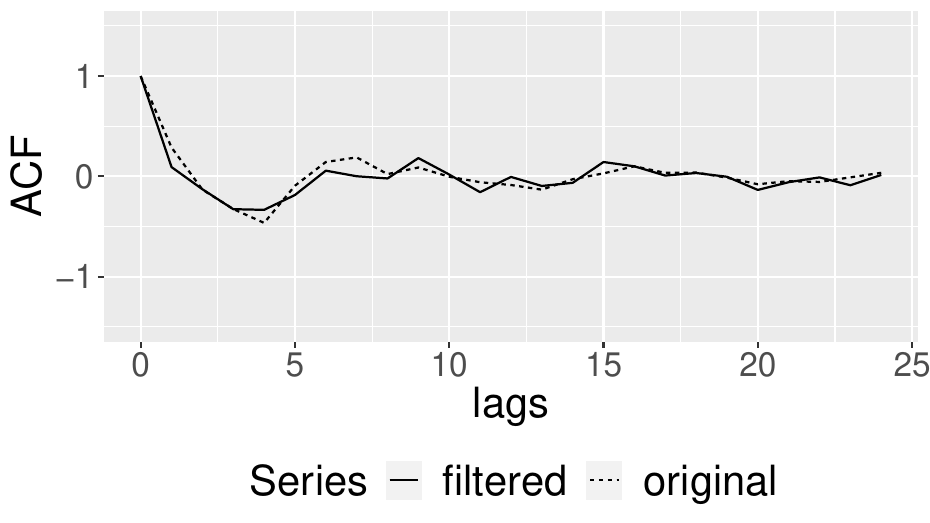}
\includegraphics[width = 2.2in, height = 1.6in]{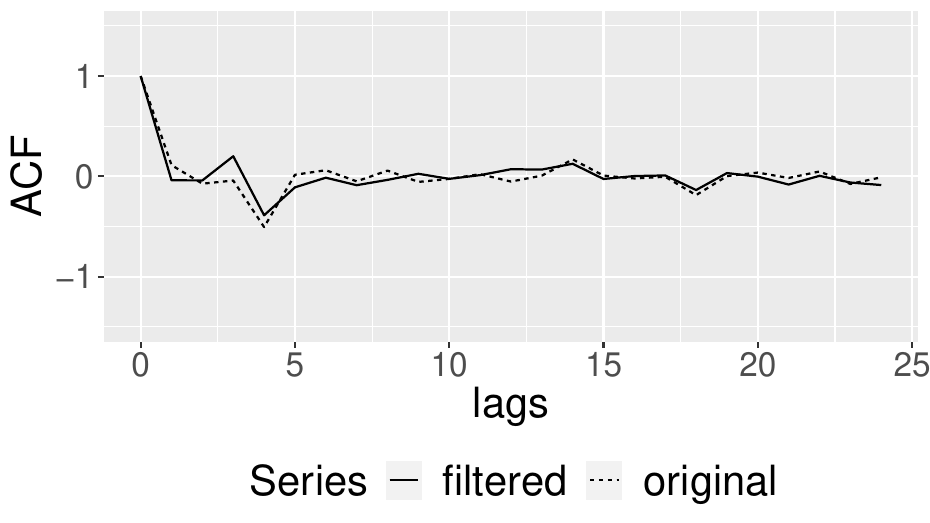}\\
\includegraphics[width = 2.2in, height = 1.6in]{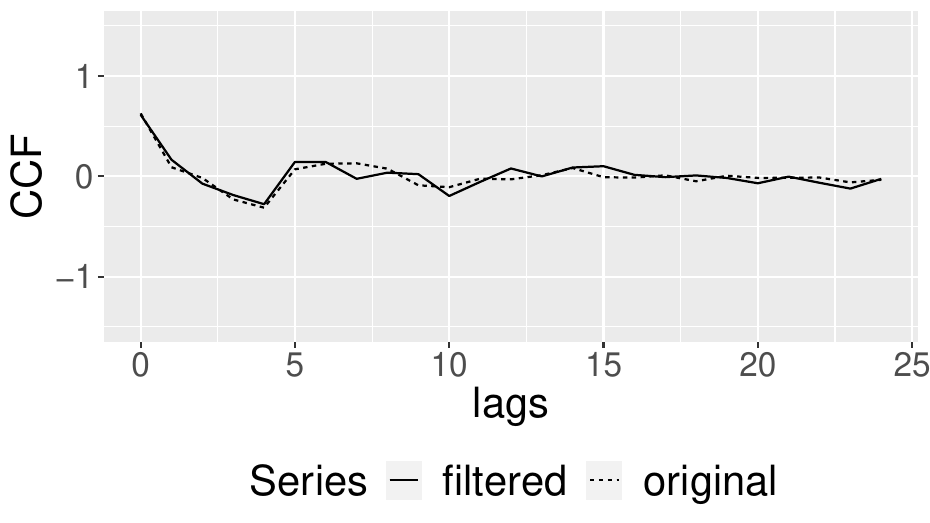}

\caption{Comparison of sample autocorrelation and the cross-correlation functions of the original and the filtered copies for the first and second detrended series for the QWI data. The top row shows the two ACF plots while the bottom plot shows the CCF between the two series. 
\lb{fig:QWI_original_vs_filtered_series_acvf} }
\end{figure}

\begin{figure}[!htbp]
\centering
\includegraphics[width = 3.5in]{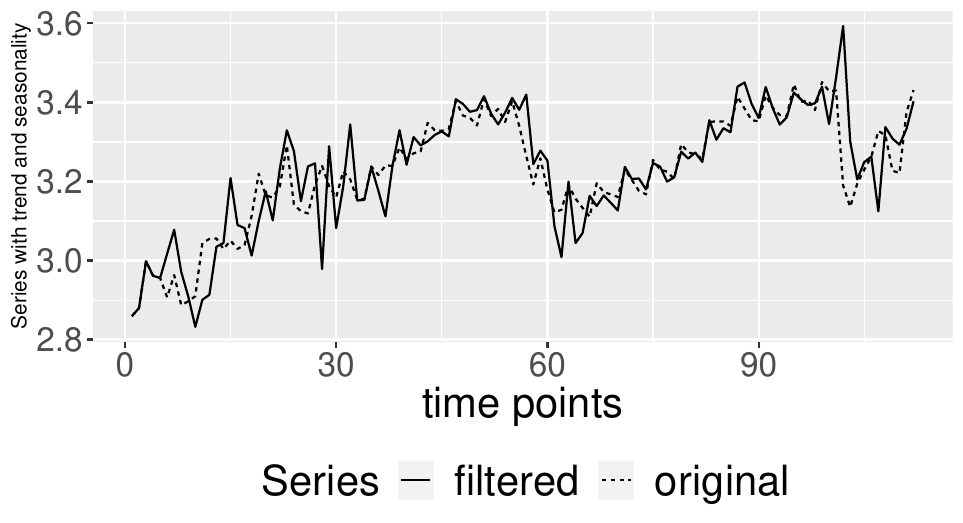}
\includegraphics[width = 3.5in]{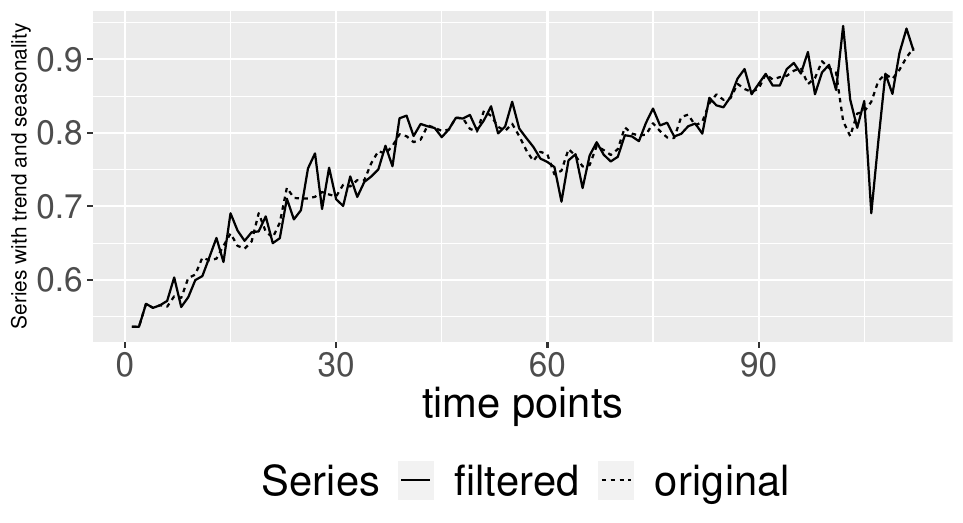}

\caption{Comparison of standardized sample paths of the original and the filtered copies for the Baltimore (top panel) and Frederick (bottom panel) series. The $y$-axis is employment count (in units of $10^5$).} 
\lb{fig:QWI_original_vs_filtered_series_sample_paths} 
\end{figure}

We obtain an $\sxmap$ filter with the choice $r=1$, and apply the filter to the growth rate data to get the privatized growth rate series. 
Then we recursively determine modified data
in the original scale, inverting the action
of the $1-B^4$ filter. The sample paths for Baltimore County and Frederick County, along with their corresponding filtered counterparts, are displayed in Figure~\ref{fig:QWI_original_vs_filtered_series_sample_paths}. The comparisons of autocovariance and cross-covariance series are depicted in Figure~\ref{fig:QWI_original_vs_filtered_series_acvf}. 

The plots in Figure~\ref{fig:QWI_original_vs_filtered_series_acvf} show us that the autocorrelation structure of the two series are successfully kept unaltered, preserving utility. Moreover,  
the cross-correlations are preserved as well -- a feature that is not available in current 
univariate privacy mechanisms. From the two plots in Figure~\ref{fig:QWI_original_vs_filtered_series_sample_paths} it is apparent that the sample paths of the actual series and the released series coincide very rarely, and yet the released series  maintains the trend and seasonal structure of the original data. 
Thus, the released time series serves as a representative proxy for the original time series, striking a balance between privacy and utility.

\section{Discussion and Future Work}

In this paper, we propose a novel privacy preservation technique for multivariate time series, denoted as $\mlip$, which leverages the concept of multivariate all-pass filtering. Multivariate all-pass filtering represents a more intricate approach compared to its univariate counterpart and relies on the spectral density matrix of the target series requiring protection. 

The effectiveness of our proposed method hinges upon the prerequisite of stationarity in the underlying series. In this paper, we have implemented the multivariate mechanism after removing deterministic trends from each component. Thus, the proposed implementation is a two-stage procedure that suffers from the drawbacks of multi-stage methods, where errors from previous stages can influence the outcome of subsequent stages. A single-stage implementation that constrains multivariate all-pass filters capable of accommodating $d$th order polynomial trends in the model is more desirable. Such procedures would exclude the macro trends from the privacy budgets, and thereby leave them invariant under the implementation of the multivariate mechanism. 

A linear filter $\Psi$ whose application leaves a $d$th order polynomial unchanged can be found by constraining $ \Psi (z)$ so as to ensure it is trend-invariant.  For $d=0$ (the case of a
constant trend) it is necessary that
$\Psi (1) =  \Psi(e^{-i\lambda}) \vert_{\lambda = 0}$ equals the identity matrix.  For $d > 0$,
it is required that the $d$th derivative of
$\Psi(e^{-i\lambda})$ (with respect to $\lambda$)
at $\lambda = 0$ is the zero matrix.  
In \cite{MRH2023} such conditions on 
the filter were parsed in terms of conditions on the cepstral coefficients.  However, in the multivariate case the derivative of
$\exp \{ \Omega (e^{-i \lambda}) \}$ is not easy to compute, due to the fact that the summands $\Omega_k e^{-i \lambda k}$ do not commute with one another.  Hence, we cannot directly impose trend-invariant filter constraints on $\Psi (z)$ through conditions on $\vartheta$. This poses a formidable challenge. We intend to explore methods for choosing MAP filters that pass polynomial trends unchanged as a topic of future investigation. 

In some applications, it may be reasonable to include the macro features such as trend and seasonality in the privacy budget. For example, if one series has a strikingly different trend, or unique seasonal pattern, it may require disclosure avoidance. We plan to investigate privacy mechanisms applicable to such situations in the future.

\bibliographystyle{plain}
\bibliography{references}  

\begin{thebibliography}{10}

\bibitem{Abowd2012}
J.~M. Abowd, K.~Gittings, K.~L. McKinney, B.~E. Stephens, L.~Vilhuber, and S.~Woodcock.
\newblock Dynamically consistent noise infusion and partially synthetic data as confidentiality protection measures for related time series.
\newblock {\em US Census Bureau Center for Economic Studies Paper No. CES-WP-12-13, Available at SSRN: https://ssrn.com/abstract=2159800 or http://dx.doi.org/10.2139/ssrn.2159800}, 2012.

\bibitem{abowd2011national}
J.~M. Abowd and L.~Vilhuber.
\newblock National estimates of gross employment and job flows from the quarterly workforce indicators with demographic and industry detail.
\newblock {\em Journal of econometrics}, 161(1):82--99, 2011.

\bibitem{Abowd2020}
John~M Abowd, Robert Ashmead, Ryan Cumings-Menon, Simson Garfinkel, Micah Heineck, Christine Heiss, Robert Johns, Daniel Kifer, Philip Leclerc, Ashwin Machanavajjhala, et~al.
\newblock The 2020 census disclosure avoidance system topdown algorithm.
\newblock {\em Harvard Data Science Review}, (Special Issue 2), 2022.

\bibitem{Arcolezi2022}
H.~H. Arcolezi, J-F. Couchot, D.~Renaud, B.~Al Bouna, and X.~Xiao.
\newblock Differentially private multivariate time series forecasting of aggregated human mobility with deep learning: Input or gradient perturbation?
\newblock {\em Neural Computing and Applications}, 34:13355–13369, 2022.

\bibitem{Bauer1955}
F.~Bauer.
\newblock Ein direktes iterationsverfahren zur hurwitz-zerlegung eines polynoms.
\newblock {\em Archiv der elektrischen Übertragung}, 2017.

\bibitem{Brillinger}
D.~R. Brillinger.
\newblock {\em Time Series: Data Analysis and Theory - David R. Brillinger}.
\newblock Siam, 2001.

\bibitem{QWI}
U.S.~Census Bureau.
\newblock {Quarterly Workforce Indicator}.
\newblock \url{https://qwiexplorer.ces.census.gov}, 2023.
\newblock [Online; accessed in 2022 and 2023].

\bibitem{Dw2006}
C.~Dwork.
\newblock Differential privacy.
\newblock {\em International Colloquium on Automata, Languages and Programming, part II (ICALP)}, 2006.

\bibitem{DwMcNiSm2006}
C.~Dwork, F.~McSherry, K.~Nissim, and A.~Smith.
\newblock Calibrating noise to sensitivity in private data analysis.
\newblock {\em Theory of Cryptography Conference(TCC)}, pages 265--284, 2006.

\bibitem{DwRo2014}
C.~Dwork and A.~Roth.
\newblock The algorithmic foundations of differential privacy.
\newblock {\em Foundations and Trends in Theoretical Computer Science}, 9:211--407, 2014.

\bibitem{EFM2015}
M.~A. Erdogdu, N.~Fawaz, and A.~Montanari.
\newblock Privacy-utility tradeoff for time-series with application to smart-meter data.
\newblock {\em Association for the Advancement of Artificial Intelligence}, 2015.

\bibitem{Fior2019}
F.~Fioretto and P.~V. Hentenryck.
\newblock Optstream: Releasing time series privately.
\newblock {\em Journal of Artificial Intelligence Research}, 2019.

\bibitem{Meng2020}
Ruobin Gong and Xiao-Li Meng.
\newblock Congenial differential privacy under mandated disclosure.
\newblock FODS '20, page 59–70, New York, NY, USA, 2020. Association for Computing Machinery.

\bibitem{holan2017cepstral}
S.~Holan, T.~S. McElroy, and G.~Wu.
\newblock The cepstral model for multivariate time series: The vector exponential model.
\newblock {\em Statistica Sinica}, pages 23--42, 2017.

\bibitem{Hong2013}
S.K. Hong, K.~Gurjar, H.S. Kim, and Y.S. Moon.
\newblock A survey on privacy preserving time-series data mining.
\newblock {\em International Conference on Intelligent Computational Systems (ICICS)}, 2013.

\bibitem{Imtiaz2020}
Sana Imtiaz, Sonia-Florina Horchidan, Zainab Abbas, Muhammad Arsalan, Hassan~Nazeer Chaudhry, and Vladimir Vlassov.
\newblock Privacy preserving time-series forecasting of user health data streams.
\newblock In {\em 2020 IEEE International Conference on Big Data (Big Data)}, pages 3428--3437, 2020.

\bibitem{Kats2022}
M.~Katsomallos, K.~Tzompanaki, and D.~Kotzinos.
\newblock Landmark privacy: Configurable differential privacy protection for time series.
\newblock {\em Conference on Data and Application Security and Privacy (CODASPY)}, 2022.

\bibitem{Lako2021}
F.~L. Lako, P.~Lajoie-Mazenc, and M.~Laurent.
\newblock Privacy-preserving publication of time-series data in smart grid.
\newblock {\em Security and Communication Networks}, 2021.

\bibitem{Leukam2021}
Franklin Leukam, Paul Lajoie-Mazenc, and Maryline Laurent.
\newblock Privacy-preserving publication of time-series data in smart grid.
\newblock {\em Security and Communication Networks}, 2021:1--21, 2021.

\bibitem{Lyu2017}
L.~Lyu, Y.~W. Law, J.~Jin, and M.~Palaniswami.
\newblock Privacy-preserving aggregation of smart metering via transformation and encryption.
\newblock {\em IEEE Trustcom/BigDataSE/ICESS, pp. 472–479, IEEE, Sydney, Australia}, 2017.

\bibitem{MRH2023}
T.~McElroy, A.~Roy, and G.~Hore.
\newblock Flip: A utility preserving privacy mechanism for time series.
\newblock {\em Journal of Machine Learning Research}, 2023.

\bibitem{McElroyJTSA2018}
T.~S. McElroy.
\newblock Recursive computation for block-nested covariance matrices.
\newblock {\em Journal of Time Series Analysis}, 2017.

\bibitem{MP2020}
T.~S. McElroy and D.~N. Politis.
\newblock {\em Time Series: A First Course with Bootstrap Starter}.
\newblock CRC Press, 2020.

\bibitem{McElroyRoy2021}
T.~S McElroy and A.~Roy.
\newblock Model identification via total frobenius norm of multivariate spectra.
\newblock {\em Journal of the Royal Statistical Society Series B: Statistical Methodology}, 84(2):473--495, 2022.

\bibitem{AGMsDR}
Y.~Nesterov, A.~Gasnikov, S.~Guminov, and P.~Dvurechensky.
\newblock Primal–dual accelerated gradient methods with small-dimensional relaxation oracle.
\newblock {\em Optimization Methods and Software}, 36:773--810, 2021.

\bibitem{brockwelldavis}
R.~A.~Davis P.~J.~Brockwell.
\newblock {\em Introduction to Time Series and Forecasting}.
\newblock Springer.

\bibitem{Politis2011}
D.~N. Politis.
\newblock Higher-order accurate, positive semi-definite estimation of large-sample covariance and spectral density matrices.
\newblock {\em Econometric Theory}, 2011.

\bibitem{RN2010}
V.~Rastogi and S.~Nath.
\newblock Differentially private aggregation of distributed time-series with transformation and encryption.
\newblock {\em International Conference on Management of Data, ACM SIGMOD}, pages 735--746, 2010.

\bibitem{SST2009}
Y.~Sang, H.~Shen, and H.~Tian.
\newblock Privacy-preserving tuple matching in distributed databases.
\newblock {\em IEEE Transactions on Knowledge and Data Engineering, 21(12)}, page 1767–1782, 2009.

\bibitem{SCR2011}
E.~Shi, T-H.~H. Chan, and E.~Rieffel.
\newblock Privacy-preserving aggregation of time-series data.
\newblock {\em In Proc. of the Network and Distributed System Security Symposium, San Diego, California}, 2011.

\bibitem{SoCh2017}
S.~Song and K.~Chaudhuri.
\newblock Composition properties of inferential privacy for time-series data.
\newblock {\em arXiv:1707.02702}, 2017.

\bibitem{SoWaCh2017}
S.~Song, Y.~Wang, and K.~Chaudhuri.
\newblock Pufferfish privacy mechanisms for correlated data.
\newblock {\em arXiv:1603.03977}, 2017.

\bibitem{Stach2019}
C.~Stach.
\newblock Vault: A privacy approach towards high-utility time series data.
\newblock {\em International Conference on Emerging Security Information, Systems and Technologies, pp. 41–46}, 2019.

\bibitem{WaZh2009}
L~Wasserman and S.~Zhou.
\newblock A statistical framework for differential privacy.
\newblock {\em Journal of the American Statistical Association}, 105:375--389, 2009.

\bibitem{Acs2012}
G.~Ács, C.~Castelluccia, and R.~Chen.
\newblock Differentially private histogram publishing through lossy compression.
\newblock {\em IEEE International Conference on Data Mining}, 2012.

\end{thebibliography}

\end{document}